\newtheorem{theorem}{Theorem}
\begin{document}
\title{EMOMA: Exact Match in One Memory Access}

\author{
Salvatore Pontarelli, Pedro Reviriego, Michael Mitzenmacher
\IEEEcompsocitemizethanks{
\IEEEcompsocthanksitem S. Pontarelli is with Consorzio Nazionale Interuniversitario per le Telecomunicazioni (CNIT), Via del Politecnico 1, 00133 Rome, Italy.\protect\\
E-mail: salvatore.pontarelli@uniroma2.it
\IEEEcompsocthanksitem P. Reviriego is with Universidad Antonio de Nebrija, C/ Pirineos, 55, E-28040 Madrid, Spain. \protect\\
E-mail: previrie@nebrija.es
\IEEEcompsocthanksitem M.  Mitzenmacher is with Harvard University, 33 Oxford Street, Cambridge, MA 02138, USA. \protect\\
E-mail: michaelm@eecs.harvard.edu
}
\thanks{Manuscript submitted 13 Set. 2017}}

\maketitle
\begin{abstract}
An important function in modern routers and switches is to perform a lookup
for a key.  Hash-based methods, and in particular cuckoo hash
tables, are popular for such lookup operations, but for large structures
stored in off-chip memory, such methods have the downside
that they may require more than one off-chip memory access to perform
the key lookup. Although the number of off-chip memory accesses 
can be reduced using on-chip approximate membership structures such as Bloom filters,
some lookups may still require more than one off-chip memory access. 
This can be problematic for some hardware implementations, as having 
only a single off-chip memory access enables a predictable processing 
of lookups and avoids the need to queue pending requests.

We provide a data structure for hash-based lookups based on cuckoo
hashing that uses only one off-chip memory access per lookup, by
utilizing an on-chip pre-filter to determine which of multiple
locations holds a key.  We make particular use of the flexibility to move
elements within a cuckoo hash table to ensure the pre-filter always gives
the correct response. While this requires a slightly more
complex insertion procedure and some additional memory accesses during insertions,
it is suitable for most packet processing applications where 
key lookups are much more frequent than insertions. An important feature of our approach is its simplicity. Our approach is based on simple logic that can be easily implemented in hardware, and hardware implementations would benefit most from the single  off-chip memory access per lookup.

\end{abstract}


\section{Introduction}
Packet classification is a key function in modern routers and switches
used for example for routing, security, and quality of service
\cite{1}.  In many of these applications, the packet is compared
against a set of rules or routes. The comparison can be an exact match,
as for example in Ethernet switching, or it can be a match with
wildcards, as in longest prefix match (LPM) or in a firewall rule.  The
exact match can be implemented using a Content Addressable Memory
(CAM) and the match with wildcards with a Ternary Content Addressable
Memory (TCAM) \cite{2,3}.  However, these memories are costly
in terms of circuit area and power and therefore alternative solutions
based on hashing techniques using standard memories are widely used
\cite{4}. In particular, for exact match, cuckoo hashing provides an
efficient solution with close to full memory utilization and a low and
bounded number of memory accesses for a match \cite{5}. For other
functions that use match with wildcards, schemes that use several
exact matches have also been proposed. For example, for LPM a binary
search on prefix lengths can be used where for each length an exact
match is done \cite{6}.  More general schemes have been proposed to
implement matches with wildcards that emulate TCAM functionality using
hash based techniques \cite{7}. In addition to reducing the circuit
complexity and power consumption, the use of hash based techniques
provides additional flexibility that is beneficial to support
programmability in software defined networks \cite{8}.

High speed routers and switches are expected to process packets with
low and predictable latency and to perform updates in the tables
without affecting the traffic. To achieve those goals, they commonly
use hardware in the form of Application Specific Integrated Circuits
(ASICs) or Field Programmable Gate Arrays (FPGAs) \cite{8,9}.
The logic in those circuits has to be simple to be able to process
packets at high speed. The time needed to process a packet has also to
be small and with a predictable worst case. For example, for
multiple-choice based hashing schemes such as cuckoo hashing, multiple
memory locations can be accessed in parallel so that the operation 
completes in one access cycle \cite{8}. This reduces
latency, and can simplify the hardware implementation by
minimizing queueing and conflicts. 

Both ASICs and FPGAs have internal memories that can be accessed with
low latency but that have a limited size. They can also be connected
to much larger external memories that have a much longer access time. 
Some tables used for packet processing are necessarily
large and need to be stored in the external memory, limiting the
speed of packet processing \cite{10}.  While parallelization may
again seem like an approach to hold operations to one memory access cycle,
for external memories parallelization can have a huge cost in terms
of hardware design complexity. Parallel access to external
memories would typically use different memory chips to perform parallel reads,
different buses to exchange addresses and data between the network
device and the external memory, and therefore a significant number of I/O pins are needed to
drive the address/data bus of multiple memory chips. Unfortunately, switch chips have a limited number of pins count and it seems that this limitation will be maintained over the next decade\cite{Binkert11}.
While the memory I/O interface
must work at high speed, parallelization is often unaffordable from the point
of view of the hardware design. When a single external  memory 
is used, the time needed to complete a lookup depends on the number of external 
memory accesses. This makes the hardware implementation
more complex if lookups are not always completed in one memory access cycle, and hence 
finding methods where lookups complete with a single memory access remains important 
in this setting to enable efficient implementations.  More generally, such schemes
may simplify or improve other systems that require lookup operations at large scale.

It is well known that in the context of multiple-choice hashing
schemes the number of memory accesses can be reduced by placing an
approximate membership data structure, such as a Bloom filter, as a
prefilter on the on-chip memory to guide where (at which choice) the
key can be found \cite{11}.  If we use a Bloom filter for each possible choice
of hash function to track which elements have been placed by each hash
function, a location in the external memory need only to be accessed when
the corresponding Bloom filter returns that the key can be found in
that location \cite{11}.  However, false positives from a Bloom filter
can still lead to requiring more than one off-chip memory access for a
non-trivial fraction of lookups, and in particular implies that more
than one lookup is required in the worst case.

We introduce an Exact Match in One Memory Access
(EMOMA) data structure, designed to allow a key
lookup with a single off-chip memory access.  We modify the prefilter
approach based on Bloom filters to tell us in which memory
location a key is currently placed by taking advantage of the cuckoo
hash table's ability to move elements if needed.  By moving elements, we can
avoid false positives in our Bloom filters, while maintaining the
simplicity of a Bloom filter based approach for hardware
implementation.  Our experimental results show that we can maintain
high memory loads with our off-chip cuckoo hash table.

The proposed EMOMA data structure is attractive for implementations that benefit 
from having a single off-chip memory access per lookup and applications that have a large ratio of lookups to insertions. Conversely, when more than one off-chip memory access can be tolerated for a small fraction of the lookups or when the number of insertions is comparable to that of lookups, other data structures will be more suitable.

Before continuing, we remark that our results 
are currently empirical; we do not have a theoretical proof regarding
for example the asymptotic performance of our data structure.  The
relationship and interactions between the Bloom filter prefilter and
the cuckoo hash table used in the EMOMA data structure are complex,
and we expect our design to lead to interesting future theoretical
work.

The rest of the paper is organized as follows. Section 2 covers the
background needed for the rest of the paper.  Most importantly, it
provides a brief overview of the relevant data structures (cuckoo hash
tables and counting block Bloom filters). Section 3 introduces our
Exact Match in One Memory Access (EMOMA) solution and discusses
several implementation options. EMOMA is evaluated in Section 4, where
we show that it can achieve high memory occupancy while requiring a
single off-chip memory access per lookup. Section 5 compares the proposed 
EMOMA solution with existing schemes. Section 6 presents the evaluation of 
the feasibility of a hardware implementation on an FPGA platform. Finally, 
Section 7 summarizes our conclusions and outlines some ideas for future work.



\section{Preliminaries}

This section provides background information on the memory
architecture of modern network devices and briefly describes two data
structures used in EMOMA: cuckoo hash tables and counting block Bloom
filters. Readers familiar with these topics can skip this section and
proceed directly to Section 3.

\subsection{Network Devices Memory Architecture}

The number of entries that network devices must store continues to
grow, while simultaneously the throughput and latency requirements
grow more demanding. Unfortunately, there is no universal
memory able to satisfy all performance requirements. On-chip SRAM on
the main network processing device has the highest throughout and
minimum latency, but the size of this memory is typically extremely small (few
MBs) compared with other technologies \cite{CheapSilicon13}. 
This is due in part to the larger size of the SRAM memory cells and to the fact that most of the
chip real estate on the main network processing device must be used for
other functions related to data transmission and switching. 
On-chip DRAMs (usually called embedded RAM or eDRAM) are currently used in
microprocessors to realize large memories such as L2/L3 caches
\cite{12}.  These memories can be larger (8x with respect to SRAM) but
have higher latencies. Off-chip memories such as DRAM have huge size
compared to on-chip memories (on the order of GB), but require power
consumption one order of magnitude greater that on-chip memory and
have latency higher than on-chip memories.  
For example, a Samsung 2Gb DRAM memory chip clocked at 1,866 MHz has worst case access time of 48 ns\footnote{Here we refer to the minimum time interval between successive active commands to the same bank of a DRAM. This time corresponds to the latency between two consecutive read accesses to different rows of the same bank of a DRAM.} \cite{samsung},\cite{Iyer03}.

Alternatives to standard
off-chip DRAM that reduce latency have been explicitly developed
for network devices. Some examples are the reduced latency DRAM
(RLDRAM) \cite{13} used in some Cisco Routers or the quad-data rate
(QDR) SRAM \cite{14} used in the 10G version of NetFPGA
\cite{9}. These memory types provide different compromises between
size, latency, and throughput, and can be used as second level memories
(hereinafter called external memories) for network devices.

Regardless of the type of memory used, it is important to minimize the average and worst case number of external memory accesses per lookup. As said in the introduction, having a single memory access per lookup simplifies the hardware implementation and reduces both latency and jitter.

Caching can be used, with the inner memory levels storing the most used entries \cite{10}. However, this approach does not improve the worst-case latency.  It also potentially creates packet reordering and packet jitter, and is effective only when the internal cache is big enough (or the traffic is concentrated enough) to catch a significant amount of traffic. 

Another option is to use the internal memory to store an approximate compressed information about the entries stored in the external memory to reduce the number of external memory accesses as done in EMOMA. This is the approach used for example in \cite{16}, where a counting Bloom filter identifies in which bucket a key is stored.  However, existing schemes do not guarantee that lookups are completed in one memory access or are not amenable to hardware implementation.
Section 5 discusses in more detail the benefits of EMOMA compared to other schemes based on using approximate compressed on-chip information.

\subsection{Cuckoo Hashing}
Cuckoo hash tables are efficient data structures commonly used to implement exact match \cite{5}. A cuckoo hash table uses a set of $d$ hash functions to access a table composed of buckets, each of which can store one or more entries. A given element $x$ is placed in one of the buckets $h_1(x)$, $h_2(x)$, \ldots , $h_d(x)$ in the table.
The structure supports the following operations:
\begin{itemize}
\item {\bf Search}: The buckets $h_i(x)$ are accessed and the entries stored there are compared with $x$;  if $x$ is found a match is returned.  
\item {\bf Insertion}: The element $x$ is inserted in one of the $d$ buckets.  If all the buckets are initially full, an element $y$ in one of the
buckets is displaced to make room for $x$ and recursively inserted.
\item {\bf Removal}: The element is searched for, and if found it is removed.
\end{itemize}
The above operations can be implemented in various ways. For example, typically on an insertion if the $d$ buckets for element $x$ are full a random bucket is selected and a random element $y$ from that bucket is moved.  
Another common implementation of cuckoo hashing is to split the cuckoo hash table 
into $d$ smaller subtables, with each hash function associated with (that is,
returning a value for) just one subtable.  The single-table and d-table alternatives provide the same asymptotic performance
in terms of memory utilization.  
When each subtable is placed on a different memory device this enables a parallel
search operation that can be completed in one memory access cycle
\cite{17}. However, as discussed in the introduction, this is not desirable for external memories, as
supporting several external memory interfaces requires increasing the
number of pins and memory controllers. 

It is possible that an element cannot be
placed successfully on an insertion in a cuckoo hash table. For example, when $d = 2$, if nine elements map
to the same pair of buckets and each bucket only has four entries,
there is no way to store all of the elements.  Theoretical results (as well
as empirical results) have shown this is a low probability
failure event as long as the load on the table remains sufficiently
small (see, e.g., \cite{18,19,20}).  This failure probability can be reduced significantly further
by using a small stash to store elements that would otherwise fail to be
placed \cite{21}; such a stash can also be used to hold elements
currently awaiting placement during the recursive insertion procedure,
allowing searches to continue while an insertion is taking place \cite{22}.

In cuckoo hashing, a search operation requires at most $d$ memory accesses. In the proposed EMOMA scheme, we use $d=2$. To achieve close to full occupancy with $d=2$, the table should support at least four entries per bucket. We use four entries per bucket in the rest of the paper.

\subsection{Counting Block Bloom Filters}

A Bloom filter is a data structure that provides approximate set membership checks using a table of bits \cite{23}.  We assume there are $m$ bits, initially all set to zero. To insert an element $x$, $k$ hash function values $h_1(x)$, \ldots $h_k(x)$ with range $[0,m-1]$ are computed and the bits with those positions in the table are set to 1. Conversely, to check if an element is present, those same positions are accessed and checked; when all of them are 1, the element is assumed to be in the set and a positive response is obtained, but if any position is 0, the element is known not to be in the set and a negative response is obtained.  The Bloom filter can produce false positive responses for elements that are not in the set, but false negative responses are not possible in a Bloom filter.

Counting Bloom filters use a counter in each position of the table instead of just a bit to enable the removal of elements from the set \cite{24}. The counters associated with the positions given by the $k$ hash functions are incremented during insertion and decremented during removal.  A match is obtained when all the counters are greater than zero. Generally, 4-bit counters are sufficient, although one can use more sophisticated methods to reduce the space for counters even further \cite{24}.  In the case of counting Bloom filters one option to minimize the use of on-chip memory is to use a normal Bloom filter (by converting all non-zero counts to the bit 1) on-chip while the associated counters are stored in external memory.  Search operations do not need to access the counters, but insertions and removals will require operations on the external memory; the benefit is that this reduces the amount of on-chip memory needed to implement the filter.

A traditional Bloom filter requires $k$ memory access to find a match. The number of accesses can be reduced by placing all the $k$ bits on the same memory word. This is done by dividing the table in blocks and using first a block selection hash function $h_0$ to select a block and then a set of $k$ hash functions $h_{1}(x)$, $h_{2}(x)$, \ldots $h_{k}(x)$ to select $k$ positions within that block \cite{25}. This variant of Bloom filter is known as a block Bloom filter.  When the size of the block is equal to or smaller than a memory word, a search can be completed in one memory access. Block Bloom filters can also be extended to support the removal of elements by using counters. In the proposed scheme, a counting block Bloom filter (CBBF) is used to select the hash function to use to access the external memory on a search operation, as we describe below.   

\section{Description of EMOMA}

EMOMA is a dictionary data structure that keeps key-value pairs $(x,v_x)$;  the structure can be queried to determine the value $v_x$ for a resident key $x$ (or it returns a null value if $x$ is not a stored key), and allows for the insertion and deletion of key-value pairs.  The structure is designed for a certain fixed size of keys that can be stored (with high probability), as explained further below.  We often refer to the key $x$ as an element.  When discussing issues such as inserting an element $x$, we often leave out discussion of the value, although it is implicitly stored with $x$.  

The EMOMA structure is built around a cuckoo hash table stored in external memory. In particular, two hash functions are used for the cuckoo hash table, and without any optimization two memory accesses could be required to search for an element. To reduce the number of memory accesses to one -- the main goal of EMOMA -- a counting block Bloom filter (CBBF) is used to determine the hash function that needs to be used to search for an element. Specifically, the CBBF keeps track of the set of elements that have been placed using the second hash function. 

On a positive response on the CBBF, we access the table using the second hash function, and otherwise, on a negative response, we access the table using the first hash function. As long as the CBBF is always correct, all searches require exactly one access to the external memory. A potential problem of this scheme is that a false positive on the CBBF would lead us to access the table using the second hash function when the element may have been inserted using the first hash function. This is avoided by ensuring that elements that would give a false positive on the CBBF are {\em always} placed according to the second hash function. That is, we avoid the possibility of a false positive leading us to perform a look up in the wrong location in memory by forcing the element to use the second hash function in case of a false positive, maintaining consistency at the cost of some flexibility.  In particular, such elements cannot be moved without violating the requirement that elements that yield a (false or true) positive on the CBBF must be placed with the second hash function. 

Two key design features make this possible. The first is that the CBBF uses the same hash function for the block selection as the first hash function for the cuckoo hash table. Because of this, entries that can create false positives on a given block in the CBBF can be easily identified, as we have their location in the cuckoo hash table. The second feature is that the cuckoo hash table provides us the flexibility to move entries so that the ones that would otherwise create false positives can be moved so that they are placed according to the second hash function. Although it may be possible to extend EMOMA for a cuckoo hash table that uses more than two hash functions, this is not considered in the rest of the paper. The main reason to do so is that in such configuration several CBBFs would be needed to identify the hash function to use for a search making the implementation more complex and less efficient.

The CBBF can be stored on-chip while the associated counters can be stored off-chip as they are not needed for search operations;  the counters will need to be modified for insertions or deletions of elements, however. The CBBF generally requires only one on-chip memory access as the block size is small and fits into a single memory word. The cuckoo hash table entries are stored off-chip. To achieve high utilization, we propose that the cuckoo hash table uses buckets that can contain (at least) four entries. As discussed in the previous section, two implementations are possible for the cuckoo hash table: a single table accessed with two hash functions or two independent subtables each accessed with a different hash function.  While in a standard cuckoo hash table both options are known to provide the same asymptotic performance in terms of memory occupancy, with our proposed data structure there are subtle reasons to be explained below that make the two alternatives different.  In the rest of the section the discussion focuses on the single-table approach but it can be easily extended for the double-table case.

\subsection{Structures}

The structures used in EMOMA for the single-table implementation are shown in Figure \ref{fig:diagram1T} and include:

\begin{enumerate}
\item A counting block Bloom filter (CBBF) that tracks all elements currently placed with the second hash function in the cuckoo table. The associated Bloom filter for the CBBF is stored on-chip and the counters are stored off-chip; we refer generally to the CBBF for both objects, where the meaning is clear by context. We denote the block selection function by $h_1(x)$ and the $k$ bit selection functions by $g_{1}(x)$, $g_{2}(x)$, \ldots $g_{k}(x)$.  The CBBF is preferably set up so that the block size is one memory word. 
\item A cuckoo hash table to store the elements and associated values; we assume four entries per bucket. This table is stored off-chip and accessed using two hash functions $h_1(x)$ and $h_2(x)$. The first hash function is the same as the one used for the block selection in the CBBF. This means that when inserting an element $y$ on the CBBF, the only other entries stored in the table that can produce a false positive in the CBBF are also in bucket $h_1(y)$. Therefore, they can be easily identified and moved out of the bucket $h_1(y)$ to avoid an erroneous response. 
\item A small stash used to store elements and their values that are pending insertion or that have failed insertion. The elements in the stash are checked for a match on every search operation. In what follows, think of the stash as a constant-sized structure.
\end{enumerate}

\begin{figure}[htb]
	\centering
	\includegraphics[width=0.45\textwidth]{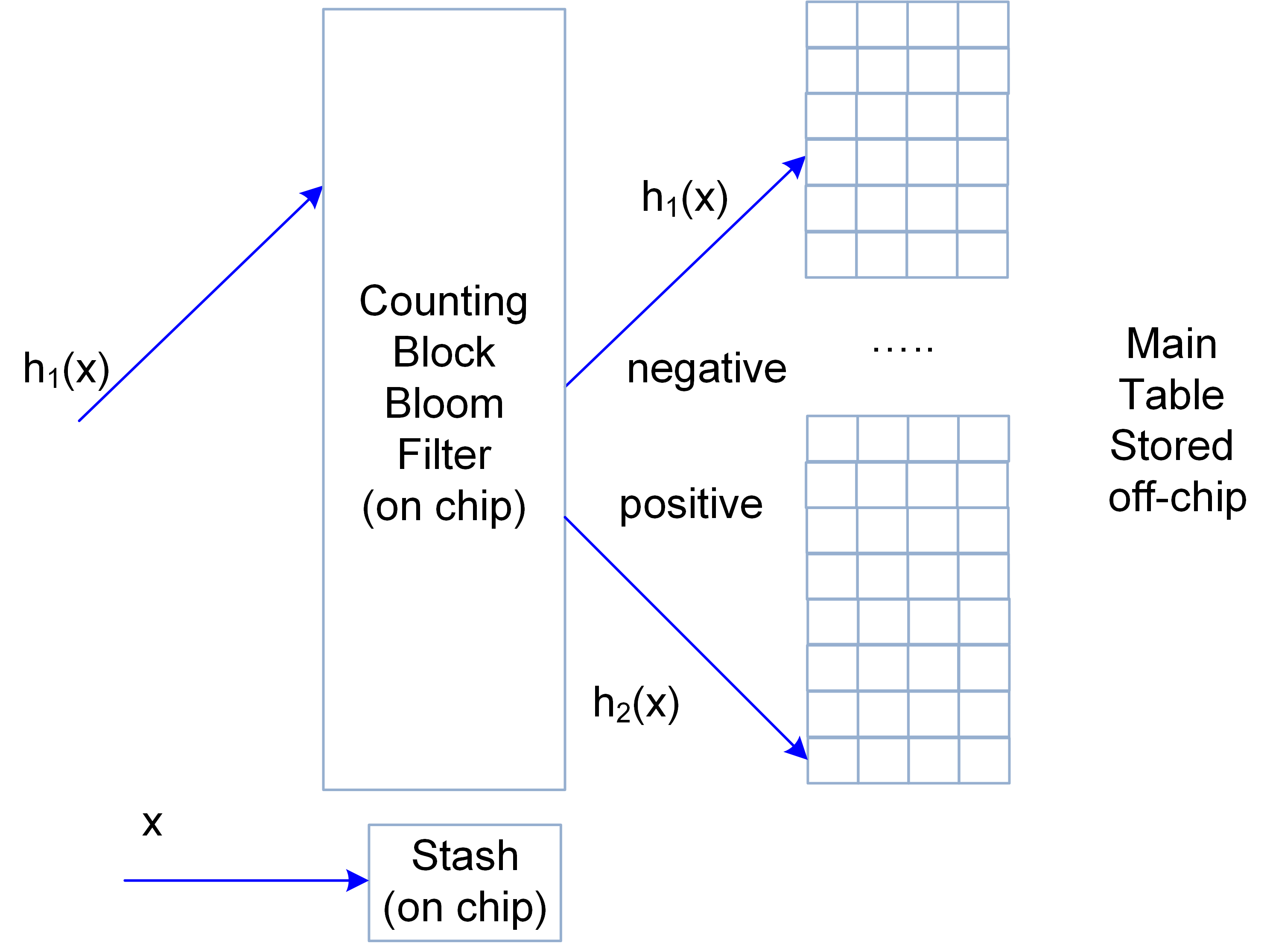} 
		\caption{Block diagram of the single-table implementation of the proposed EMOMA scheme.} 
	\label{fig:diagram1T}
\end{figure}

As mentioned before, an alternative is to place the elements on two independent subtables, one accessed with $h_1(x)$ and the other with $h_2(x)$. This double-table implementation is illustrated in Figure \ref{fig:diagram2T}. In this configuration, to have the same number of buckets, the size of each of the tables should be half that of the single table. Since the CBBF uses $h_1(x)$ as the block selection function, this in turn means that the CBBF has also half the number of blocks as in the single table case. Assuming that the same amount of on-chip memory is used for the CBBF in both configurations this means that the size of the block in the CBBF is double that of the single-table case. In the following, the discussion will focus on the single-table implementation but the procedures described can easily be modified for the double-table implementation.

\begin{figure}[htb]
	\centering
	\includegraphics[width=0.45\textwidth]{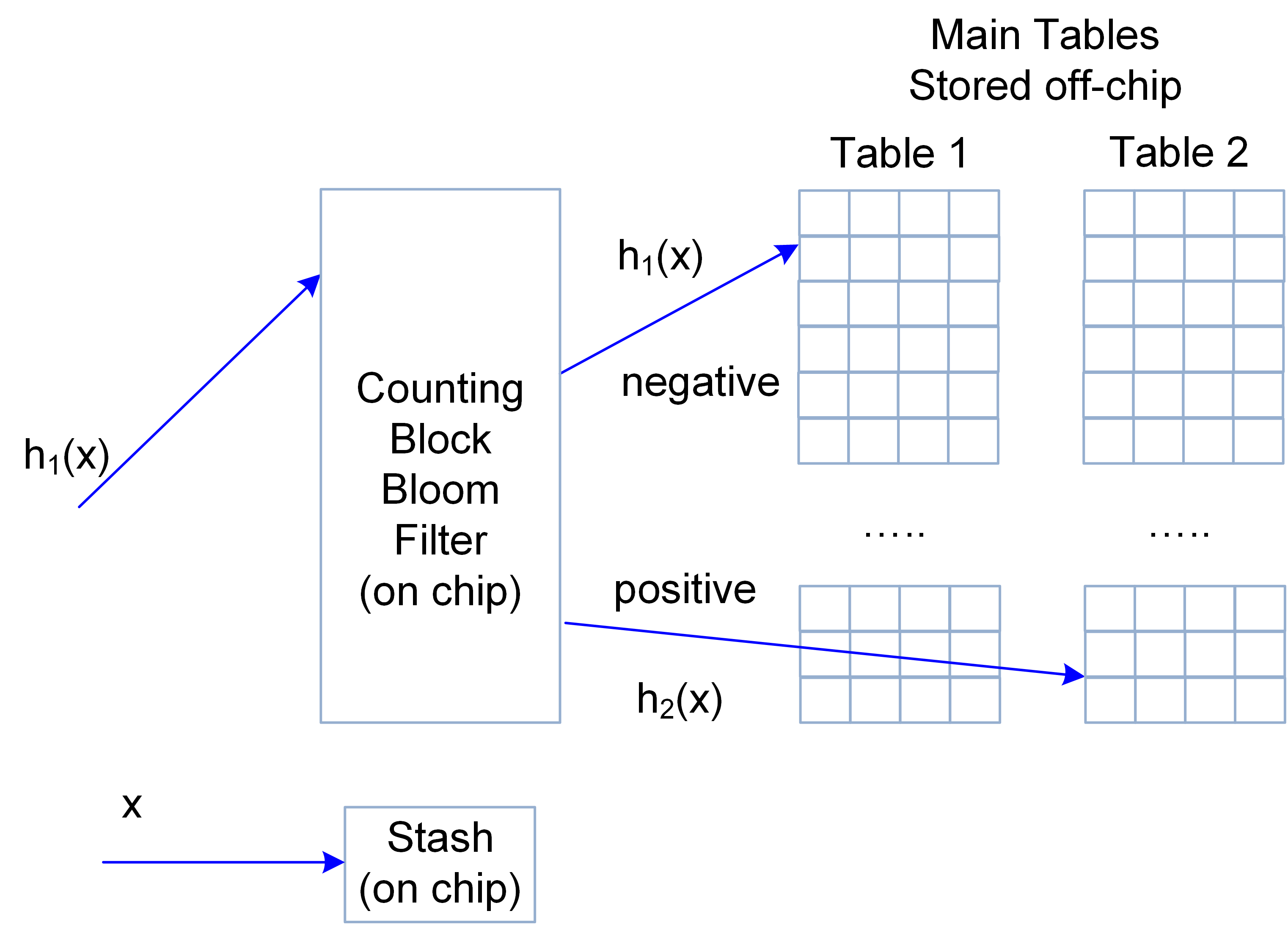} 
		\caption{Block diagram of the double-table implementation of the proposed EMOMA scheme.} 
	\label{fig:diagram2T}
\end{figure}

\begin{figure}[htb]
	\centering
	\includegraphics[width=0.27\textwidth]{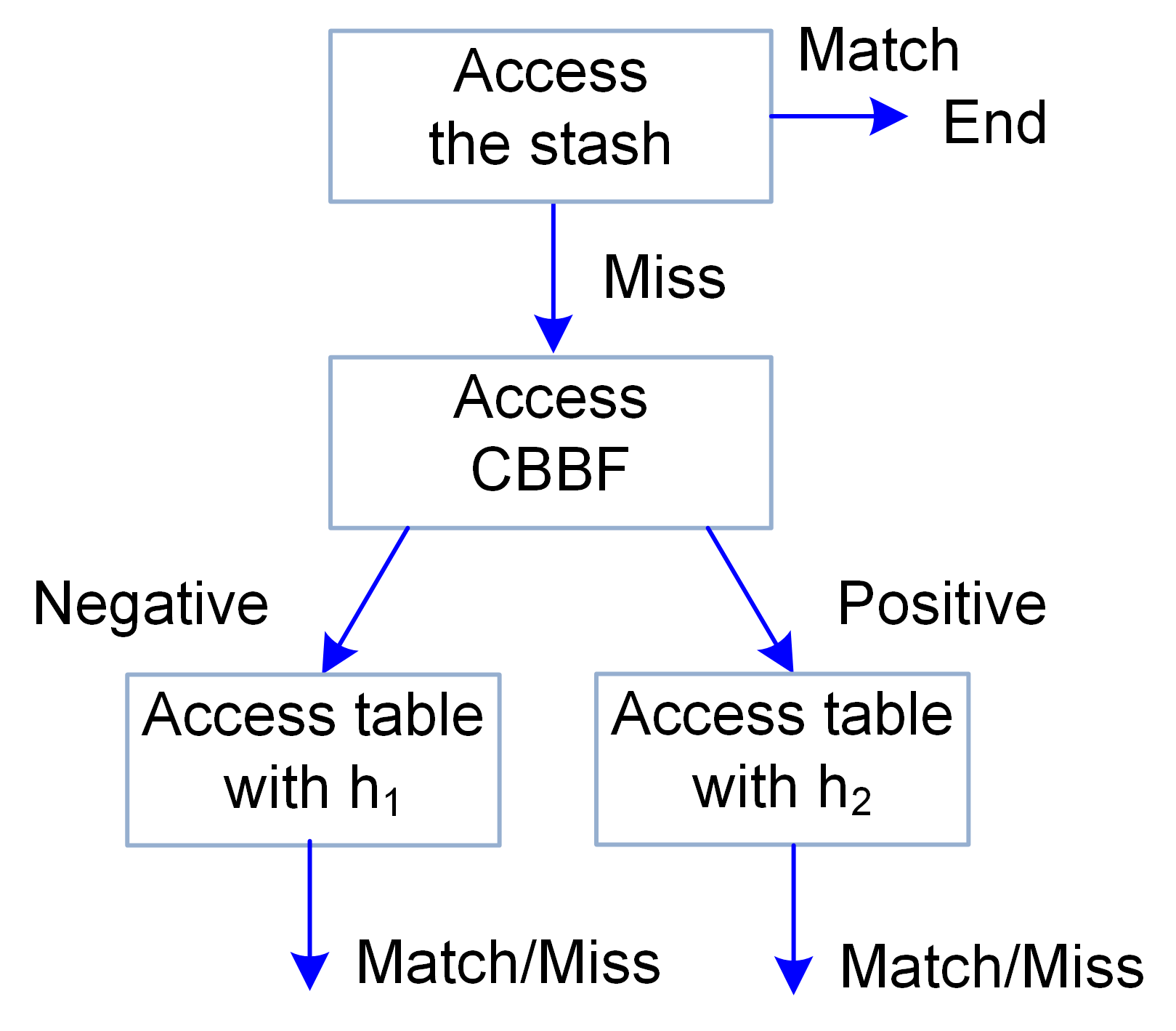} 
\caption{Search Operation.}
\label{fig:search}
\end{figure}

\subsection{Operations}

The process to search for an element $x$ is illustrated on Figure \ref{fig:search} and proceeds as follows:

\begin{enumerate}
\item The element is compared with the elements in the stash. On a match, the value $v_{x}$ associated with that entry is returned, ending the process.

\item Otherwise, the CBBF is checked by accessing position $h_1(x)$ and checking if the bits given by $g_{1}(x)$, $g_{2}(x)$, \ldots $g_{k}(x)$ are all set to one (a positive response) or not (a negative response).

\item On a negative response, we read bucket $h_1(x)$ in the hash table and $x$ is compared with the elements stored there. On a match, the value $v_{x}$ associated with that entry is returned, and otherwise we return a null value.

\item On a positive response, we read bucket $h_2(x)$ in the hash table and $x$ is compared with the elements stored there. On a match, the value $v_{x}$ associated with that entry is returned, and otherwise we return a null value.
\end{enumerate}

In all cases, at most one off-chip memory access is needed. 

Insertion is more complex. An EMOMA insertion must ensure that there are no false positives for elements inserted using $h_1(x)$, as any false positive would cause the search to use the second hash function when the element was inserted using the first hash function, yielding an incorrect response.  
Therefore we ensure that we place elements obtaining a positive response from the CBBF using $h_2(x)$. However, those elements can no longer be moved and therefore reduce the number of available moves in the cuckoo hash table, which are needed to maximize occupancy. In the following we refer to such elements as ``locked.'' As an example, assume now that a given block in the CBBF has already some bits set to one because previously some elements that map to that block have been inserted using $h_2(x)$. If we want to insert a new element $y$ that also maps to that block, we need to check the CBBF. If the response of this check is positive, this means that a search for $y$ would always use $h_2(y)$. Therefore, we have no choice but to insert $y$ using $h_2(y)$ and $y$ is ``locked'' in that bucket.  Locked elements  can only be moved if at some point elements are removed from the CBBF so that the locked element is no longer a false positive in the CBBF, thereby unlocking the element. Note that, to maintain proper counts in the CBBF for when elements are deleted, an element $y$ placed using the second hash function because it yields a false positive on the CBBF must still be added to the CBBF on insertion.

To minimize the number of elements that are locked, the number of elements inserted using $h_2(x)$ should be minimized as this reduces the number of ones on the CBBF and thus its false positive rate.  This fact seems to motivate using a single table accessed with two hash functions instead of the double-table implementation.  When two tables are used and we are close to full occupancy, at most approximately half the elements can be inserted using $h_1(x)$; with a single table, the number of elements inserted using $h_1(x)$ can be much larger than half. However, when two tables are used, the size of the block in the CBBF is larger making it more effective. Therefore, it is not clear which of the two options will perform better. In the evaluation section, results are presented for both options to provide insight into this question.

To present the insertion algorithm, we first describe the overall process and then discuss each of the steps in more detail. The process is illustrated in Figure \ref{fig:insertion} and starts when a new element $x$ arrives for insertion. The insertion algorithm will perform up to $t$ iterations, where in each iteration an element from the stash is attempted to be placed.  The steps in the algorithm are as follows:

\begin{enumerate}

\item Step 1: the new element $x$ is placed in the stash. This ensures that it will be found should any search operation for $x$ occur during the insertion.
\item Step 2: select a bucket to insert the new element $x$.
\item Step 3: select a cell in the bucket chosen in Step 2 to insert the new element $x$.
\item Step 4: insert element $x$ in the selected bucket and cell and update the relevant data structures if needed. Increase the number of iterations by one.
\item Step 5: Check if there are elements in the stash, and if the maximum number of iterations $t$ has not been reached.  If both conditions hold, select one of the elements uniformly at random and go to Step 2.  Otherwise, the insertion process ends.
\end{enumerate}

\begin{figure}[htb]
	\centering
	\includegraphics[width=0.45\textwidth]{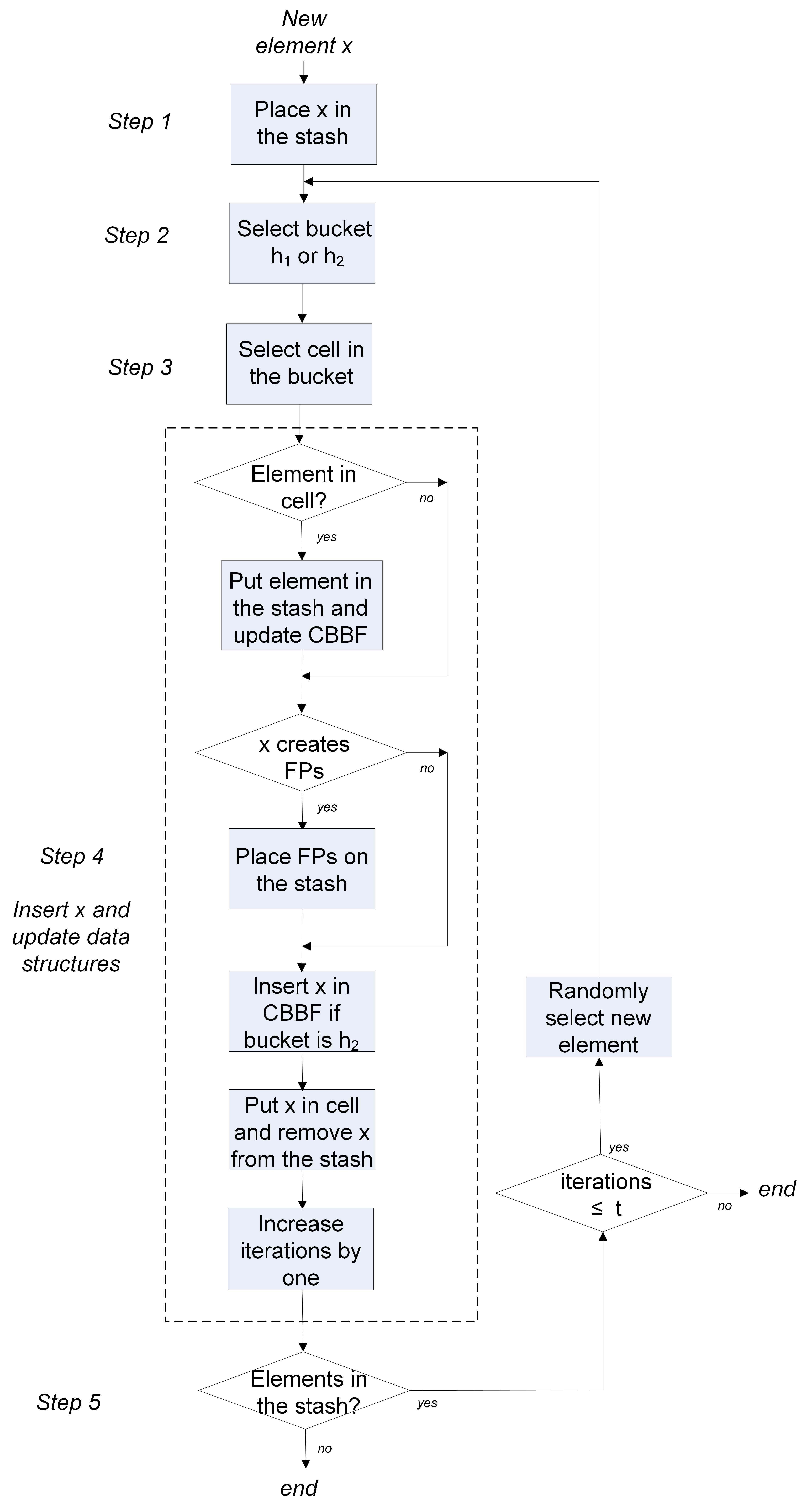} 
\caption{Insertion Operation.}
\label{fig:insertion}
\end{figure}

The first step to insert an element $x$ in EMOMA is to place it in the stash. This enables search operations to continue during the insertion as the new element will be found if a search is done. The same applies to elements that may be placed into the stash during the insertion as discussed in the following steps of the algorithm.

In the second step, we select one of the two buckets $h_1(x)$ or $h_2(x)$. The bucket selection depends on the following conditions:

\begin{enumerate}
\item   Are there empty cells in $h_1(x)$ and $h_2(x)$?
\item 	Is the element $x$ being inserted a false positive on the CBBF?
\item 	Does inserting $x$ in the CBBF create false positives for elements stored in bucket $h_1(x)$?
\end{enumerate}

Those conditions can be checked by reading buckets $h_1(x)$ and $h_2(x)$ and the CBBF block in address $h_1(x)$ and doing some simple calculations.  There are five possible cases for an insertion, as show in in Table \ref{t:insertion}.  (Note these cases are mutually exclusive and partition all possible cases.)  We describe these cases in turn.  

The first case occurs when $x$ itself is a false positive in the CBBF;  in that case, we must insert $x$ at $h_2(x)$ as on a search for $x$, the CBBF would return a positive and proceed to access the bucket $h_2(x)$.  This is illustrated on Figure \ref{fig:insert} (Case 1), where even if there is an empty cell in bucket $h_1(x)$ and there is no room in bucket $h_2(x)$, the new element $x$ must be inserted in $h_2(x)$ displacing one of the elements stored there. 

The second case occurs when the new element is not a false positive on the CBBF and there are empty cells in bucket $h_1(x)$. We then insert the new element on $h_1(x)$. This second case is illustrated on Figure \ref{fig:insert}  (Case 2).

The third case is when the new element $x$ is not a false positive on the CBBF, all the cells are occupied in bucket $h_1(x)$, there are empty cells on bucket $h_2(x)$, and inserting $x$ in the CBBF does not create false positives for other elements stored in bucket $h_1(x)$. Then $x$ is inserted in bucket $h_2(x)$ as shown in Figure \ref{fig:insert}  (Case 3).

The fourth case occurs when the new element $x$ is not a false positive on the CBBF, all the cells are occupied in bucket $h_1(x)$, and inserting $x$ on the CBBF creates false positives for other elements stored in bucket $h_1(x)$. The element is stored in bucket $h_1(x)$ to avoid the false positives even if there are empty cells in bucket $h_2(x)$. This is illustrated on Figure \ref{fig:insert}  (Case 4) where inserting $x$ in the CBBF would create a false positive for element $a$ that was also inserted in $h_1(x)$ (where $h_1(a)$ = $h_1(x)$).

Finally, the last case is when both buckets are full, the new element is not a false positive in the CBBF, and inserting it in the CBBF does not create other false positives. Then the bucket for the insertion is selected randomly, as both can be used. 

\begin{figure*}[htb]
\begin{tabular}{cccc}
	\includegraphics[width=0.22\textwidth]{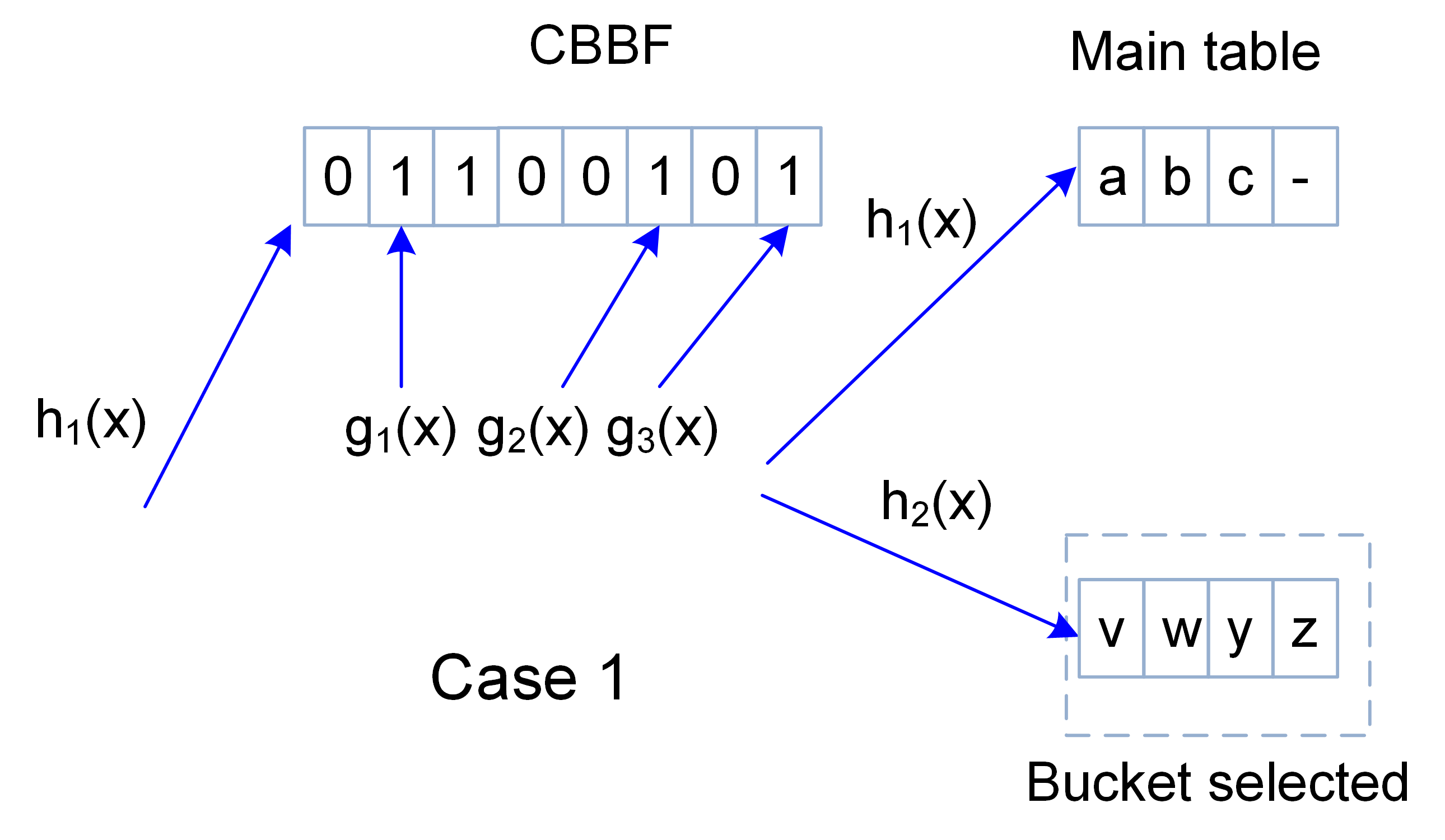} & 
	\includegraphics[width=0.22\textwidth]{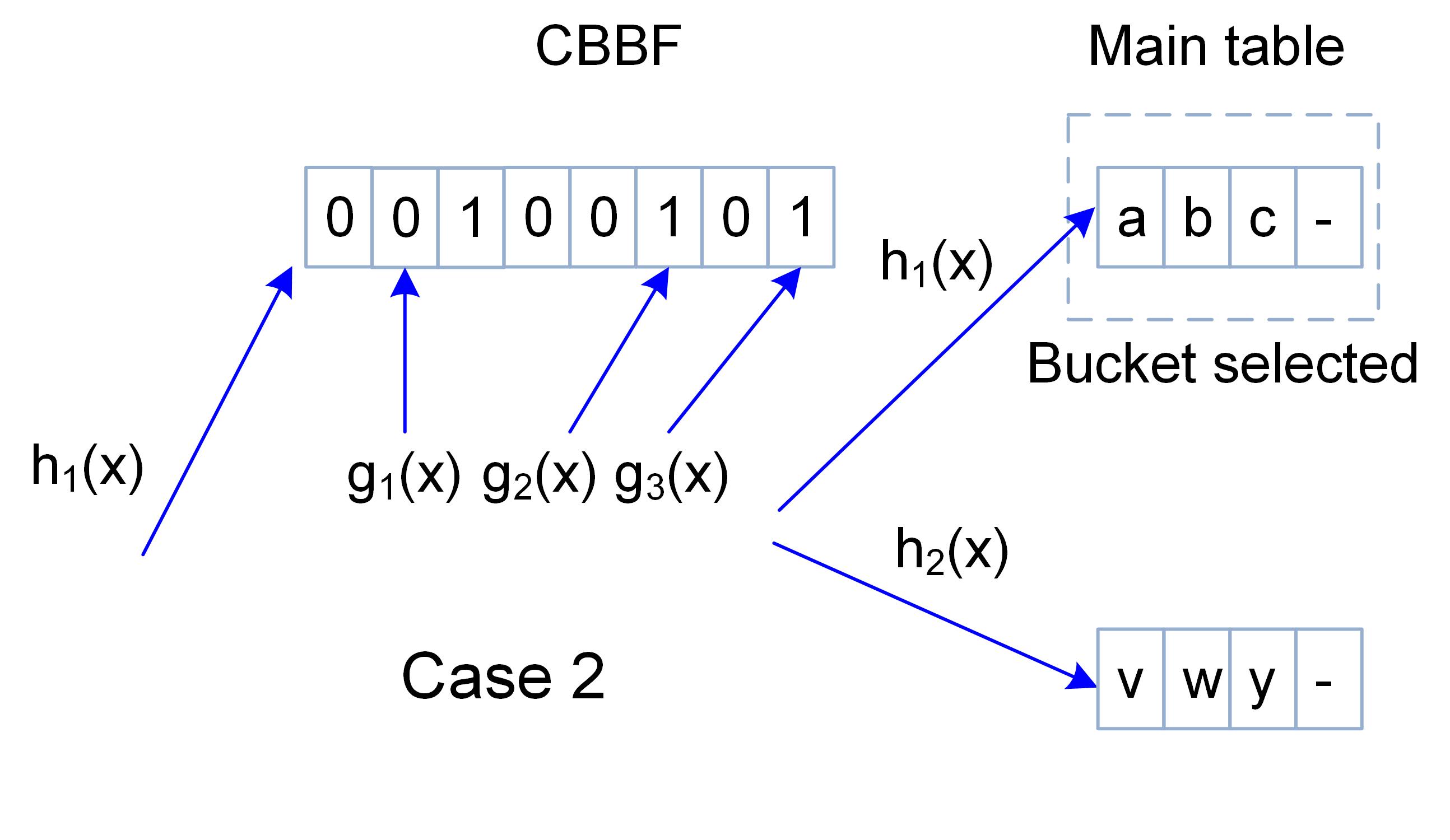} &
	\includegraphics[width=0.22\textwidth]{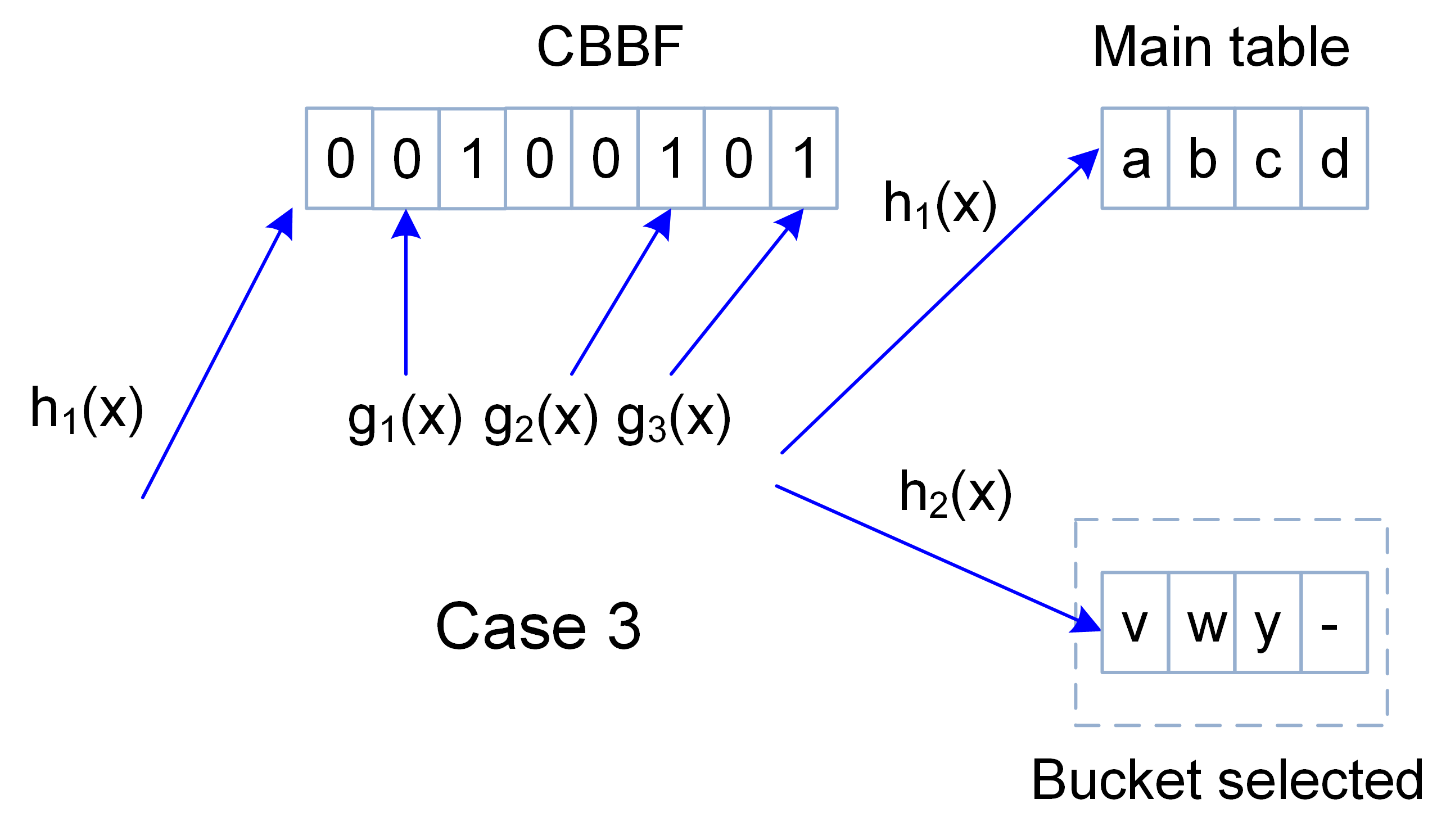} &
	\includegraphics[width=0.22\textwidth]{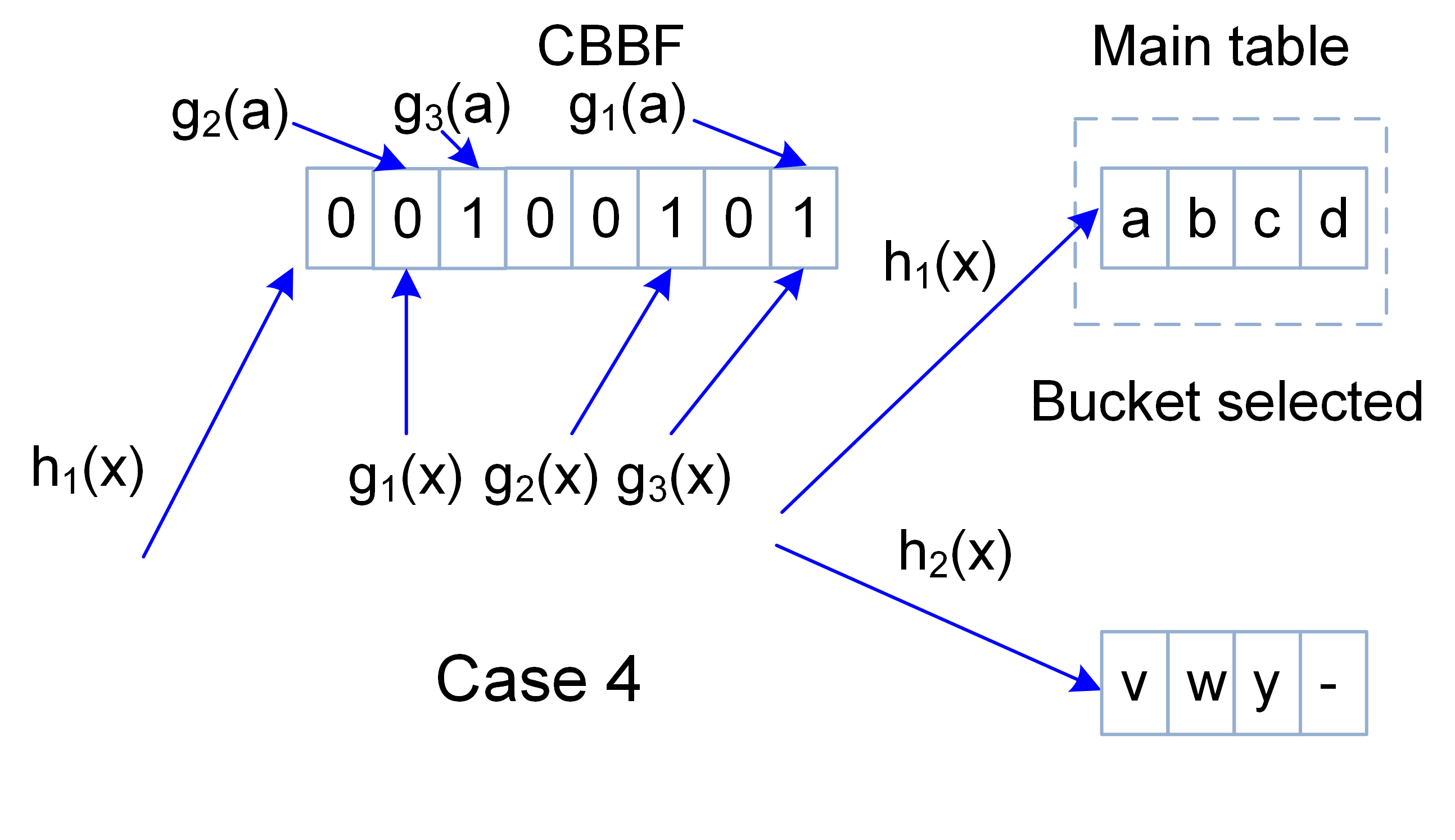} \\    
\end{tabular}
\caption{Examples of insertion of an element $x$ in EMOMA.}
\label{fig:insert}
\end{figure*}

\begin{table*}[t]
\centering
\caption{Selection of a bucket for insertion}
\begin{tabular}{|c|c|c|c|c|c|}
  \hline
Case & Empty cells& Empty cells & x is a false positive & Inserting x on the CBBF &  Bucket selected  \\
& in $h_1(x)$ &	 in $h_2(x)$ &	 on the CBBF &	creates false positives	&  for insertion  \\
\hline
Case 1	& Yes/No &	Yes/No &	Yes	& Yes/No &	$h_2(x)$ \\
\hline
Case 2 & Yes &	Yes/No &	No &	Yes/No &	$h_1(x)$   \\
\hline
Case 3	& No &	Yes  &	No &	No &	$h_2(x)$  \\
\hline
Case 4	& No &	Yes/No &	No &	Yes	 & $h_1(x)$  \\
\hline
Case 5	& No &	No & No & 	No &	Random selection \\
\hline
\end{tabular}
\label{t:insertion}
\end{table*}

The third step of the insertion algorithm selects a cell in the bucket chosen in the second step. This is done as follows:

\begin{enumerate}

\item If there are empty cells in the bucket, select one of them randomly. 
\item If all cells are occupied the selection is done among elements that are not locked as follows: with probability $P$ select randomly among elements that create the fewest locked elements when moved (elements inserted with $h_2$ will never create false positives). With probability $1-P$ randomly among all elements.
\end{enumerate}

It might seem that to reduce the elements that are locked during movements, we should set $P=1$.  Such a greedy approach of selecting an element to move that produces the fewest locked elements can limit flexibility, and can cause insertion failures that leave elements in the stash that could be placed. For example, if the element selected is $y$ and in bucket $h_2(y)$ there are four locked elements the insertion process will cycle until eventually halting and leaving additional elements in the stash as we will show in detail later, putting the data structure closer to failure. We corroborate this in the evaluation section.

Once the bucket and cell have been selected, the fourth step of the algorithm inserts element $x$ there. Before doing so, we need to check if there is an element $y$ stored in that cell. If so,  $y$ is placed in the stash and removed from the CBBF if it was inserted using $h_2(y)$. This may unlock elements that are no longer false positives on the CBBF due to the removal of $y$ from the CBBF;  such elements remain in the second table, however. We also need to check if $x$ is inserted into $h_2(x)$ that, as a result of inserting $x$, elements in bucket $h_1(x)$ need to be moved (or locked) because they will be false positives on the CBBF once $x$ is inserted. If so they are also placed in the stash. Then $x$ is inserted in the CBBF if the selected bucket is $h_2(x)$, and finally $x$ is inserted on the selected cell and removed from the stash. The number of iterations is increased by one before proceeding to the next step.

In the fifth and last step of the insertion algorithm, we check if there are elements in the stash (either because they are placed there while inserting $x$, or if they have been left there from previous insertion processes). If there are any elements in the stash, and the maximum number of insertion iterations $t$ has not been been performed, then we select randomly one of the elements in the stash and return to the second step.  Otherwise, the insertion process ends.
The number of iterations affects the time for an insertion process as well as the size of the stash that is needed.  Generally, the more iterations, the longer an insertion can take, but the smaller a stash required.  We explore this trade-off in our experiments.  Elements may be left in the stash at the end of the insertion process. If the stash ever fails to have enough room for elements that have not been placed, the data structure fails. The goal is that this type of failure should be a low probability event.

As with most hashing-based lookup data structures, insertion is more complex than search. Fortunately, in most networking applications, insertions are much less frequent than searches. For example, in a router, the peak rate of BGP updates is in the order of thousands per second, while the average rate is a few insertions per second \cite{bgp1, bgp2}. On the other hand, a router can perform several million packet lookups in a second. Similar or smaller update rates occur in other network applications such as MAC learning or reconfiguration of OpenFlow tables. 



Removing an element starts with a search and if the element is found it is removed from the table. If the element's location was given by the second hash function, the element is also removed from the CBBF by decreasing the counters associated with bits $g_{1}(x)$, $g_{2}(x)$, \ldots $g_{k}(x)$ in position $h_1(x)$. If any counter reaches zero, the corresponding bit in the bit (Bloom filter) representation of the CBBF is cleared. The removal of elements from the CBBF may unlock elements previously locked on their second bucket if they are no longer false positives on the CBBF;  however, such unlocked elements are not readily detected, and will not be moved to the bucket given by their first hash function until possibly some later operation.  A potential optimization would be to periodically scrub the table looking for elements $y$ stored in position $h_2(y)$ and moving them to position  $h_1(y)$ if they are not false positives on the CBBF and there are empty cells on bucket $h_1(y)$.  We do not explore this potential optimization further here.

As mentioned before, a key feature in EMOMA is that the first hash function used to access the hash table is also used as the block selection function on the CBBF. Therefore, when we insert an element in the table using the second hash function, the elements that can result in a false positive in the Bloom filter as a result can be easily identified;  they are in the bucket indexed by $h_1(x)$ that were inserted there using their own first hash function. To review, the main differences of EMOMA versus a standard cuckoo hash with two tables are: 

\begin{itemize}
\item  Elements that are false positives in the CBBF are ``locked'' and can only be inserted in the cuckoo hash table using the second hash function. This reduces the number of options to perform movements in the table.
\item Insertions in the cuckoo hash table using the second hash function can create new false positives for the elements in bucket $h_1(x)$ that require additional movements. Those elements have to be placed in the stash and re-inserted into the second table. This means that, in contrast to standard cuckoo hashing, the stash occupancy can grow during an insertion. Therefore, the stash needs to be dimensioned to accommodate those elements in addition to the elements that have been unable to terminate insertion. 
\end{itemize}

The effect of these differences depends mainly on the false positive rate of the CBBF. That is why the insertion algorithm aims to minimize the number of locked elements. In the next section, we show that even when the number of bits per entry used in the CBBF is small, EMOMA can achieve memory occupancies of over 95\% with 2 bucket choices per element and 4 entries per bucket. A standard cuckoo hash table can achieve memory occupancies of around 97\% with 2 choices per element and 4 entries per bucket.  The required stash size and number of movements needed for the insertions also increase compared to a standard cuckoo hash but remain reasonable. 
Therefore, the restrictions created by EMOMA for movements in the cuckoo hash table have only a minor effect in practical scenarios.  Theoretically analyzing the effect of the CBBF on achievable load thresholds for cuckoo hash tables remains a tantalizing open problem.  

We formalize our discussion with this theorem.
\begin{theorem}
When all elements have been placed successfully or lie in the stash, the EMOMA data structure completes search operations with one external memory access.
\end{theorem}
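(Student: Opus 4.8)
\emph{Proof approach.} The statement has two parts, and the plan is to dispatch them separately. The first --- that a search uses at most one external memory access --- is immediate from the search procedure: Step~1 consults the stash and Step~2 consults the on-chip bit representation of the CBBF, neither of which touches external memory, and then exactly one of Step~3 or Step~4 executes, each reading a single bucket ($h_1(x)$ or $h_2(x)$) of the off-chip table. The substance is the second part: that this single access returns the correct answer. To get there I would isolate the invariant

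\emph{(I): every element residing in the cuckoo hash table that was placed there using its first hash function yields a negative response from the CBBF,}

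to be verified by induction over all operations applied to an initially empty structure; I would also use the routine bookkeeping fact that the CBBF's bit representation always equals the Bloom filter of the set $S_2$ of resident elements placed with their second hash function (so there are no false negatives with respect to $S_2$).

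Granting (I), correctness of search follows from a short case analysis. Consider a search for $x$. If $x$ lies in the stash it is found in Step~1, with no external access. Otherwise, by the theorem's hypothesis $x$ is either absent or resides in the cuckoo table, and we consult the CBBF. If the response is negative, then $x \notin S_2$, so a resident $x$ must have been placed with $h_1$ and hence lies in bucket $h_1(x)$, which Step~3 reads; the returned value is therefore correct whether or not $x$ is present. If the response is positive, then by (I) a resident $x$ cannot have been placed with $h_1$, so it was placed with $h_2$ and lies in bucket $h_2(x)$, which Step~4 reads; again the answer is correct. In every branch exactly one bucket is accessed.

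It then remains to prove (I) inductively. The structural feature that makes this tractable is that the block-selection function of the CBBF coincides with the cuckoo table's first hash function $h_1$: inserting an element $x$ into the CBBF sets bits only in block $h_1(x)$, so the only resident elements whose CBBF response can change are those $z$ with $h_1(z)=h_1(x)$, and every such $z$ that was placed with $h_1$ lies exactly in bucket $h_1(x)$ --- which the insertion procedure reads in any case. With this in hand one checks that each of the five insertion cases of Table~\ref{t:insertion}, together with the eviction logic of Step~4, preserves (I): the incoming element $x$ is placed with $h_2$ whenever it is itself a positive on the CBBF (Case~1), so $x$ never violates (I); whenever placing $x$ with $h_2$ would turn some $h_1$-placed occupant $z$ of bucket $h_1(x)$ into a positive, either $x$ is placed with $h_1$ instead (Case~4) or --- in Case~1, where $x$ must go to $h_2(x)$ --- every such $z$ is moved to the stash before the CBBF is updated. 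Elements displaced during the iteration loop, or left unplaced in the stash when the loop terminates, are exactly the elements the theorem's hypothesis permits to lie in the stash and so need not satisfy (I). Finally, deletions --- and the removal from the CBBF of a displaced element that had been placed with $h_2$ --- only clear CBBF bits, which can never convert a negative response into a positive one, so they preserve (I) as well.

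The main obstacle I expect is precisely this inductive step: one must argue carefully that the occupants of bucket $h_1(x)$ really do exhaust the resident elements whose CBBF response could flip from negative to positive as a consequence of an insertion, and that the ``stash the victim, update the CBBF, then re-insert'' ordering never leaves, in a stable state, an $h_1$-placed element that reads positive. The confinement of all CBBF updates to the single block $h_1(x)$ is the crux that keeps this local and finite; the re-insertion loop is handled by observing that whatever it fails to place simply remains in the stash, where a later search finds it in Step~1.
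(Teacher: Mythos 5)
Your proposal is correct and follows essentially the same approach as the paper's proof: the paper likewise reduces correctness to the two-sided invariant that $h_1$-placed elements are CBBF-negative and $h_2$-placed elements are CBBF-positive (true positives, since they are always added to the CBBF), verified through the insertion case analysis and the observation that removals only clear bits. Your version simply makes explicit, as a formal induction with invariant (I), what the paper states more tersely by appeal to the case table and the Step~4 stashing of would-be false positives.
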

\begin{proof}
As only one bucket is read on a search, we argue that if an element $x$ is stored in the table, the search operation will always find it. If $x$ is stored in bucket $h_1(x)$, then EMOMA will fail to find it only if the CBBF returns a positive on $x$. This is not possible as elements that are positive on the CBBF are always inserted into $h_2(x)$, as can be seen by examining all the cases in the case analysis. Similarly, if an element $x$ is stored in bucket $h_2(x)$, then a search operation for $x$ will fail only if $x$ is not a positive on the CBBF. Again, this is not possible as elements inserted using $h_2(x)$ are added to the CBBF. These properties hold even when (other) elements are removed. When another element $y$ is removed, it is also removed from the CBBF if it was stored on its second bucket. If $x$ was a negative in the CBBF, it will remain so after the removal.  If $x$ was a positive in the CBBF, even if it was originally a false positive it was added into the CBBF to make it a true positive, and thus the CBBF result for $x$ does not depend on whether other elements are stored or not on the CBBF.
\end{proof}

\section{Evaluation of EMOMA}

We have implemented the EMOMA scheme in C++ to test how its behavior depends on the various design parameters and to determine how efficiently it uses memory in practical settings.  Since all search operations are completed in one memory access, the main performance metrics for EMOMA are the memory occupancy that can be achieved before the data structure fails (by overflowing the stash on an insertion) and the average insertion time of an element. The parameters that we analyzed are:
\begin{itemize}
\item The parameter $P$ that determines the probability of selecting an element to move randomly, as described previously.
\item 	The number of bit selection hash functions $k$ used in the CBBF.
\item 	The number of tables used in the cuckoo hash table (single-table or double-table implementations).
\item 	The number of on-chip memory bits per element (bpe) in the table, which determines the relative size of the CBBF versus the off-chip tables.
\item 	The maximum number of iterations $t$ allowed during an insertion before stopping and leaving the elements in the stash. These insertions are referred in the following as non-terminating insertions.
\item 	The size of the stash needed to avoid stash overflow.
\end{itemize}

We first present simulations showing the behavior of the stash with respect to the $k$ and $P$ parameters for three table sizes (32K, 1M and 8M, where we conventionally use 1K for $2^{10}$ elements and 1M for $2^{20}$ elements.). We then present simulations to evaluate the stash occupancy when the EMOMA structure works at a high load (95\%) under dynamic conditions (repeated insertion and removal of elements). We also consider the average insertion time of the EMOMA structure. Finally, we estimate how the size of the stash varies with table size and present an estimation of the failure probability due to stash overflow. In order to better understand the impact of the EMOMA scheme on the average insertion time and the stash occupancy we compared the obtained results with corresponding results using a standard cuckoo hash table.

\subsection{Parameter selection}

Our first goal is to determine generally suitable values for the number of hash functions $k$ in the CBBF and the probability $P$ of selecting an element to move randomly;  we then fix these values for the remainder of our experiments.  For this evaluation, we generously overprovision a stash size of 64 elements, although in many configurations EMOMA can function with a smaller stash. The maximum stash occupancy during each test is logged and can be used for relative comparisons. A larger stash occupancy means that those parameter settings are more likely to eventually lead to a failure due to stash overflow.  

We first present two experiments to illustrate the influence of $P$ and $k$ on performance. In the first experiment, two small tables that can hold 32K elements each were used, $k$ was set to four, and four bits per element were used for the CBBF while $P$ varied from 0 to 1. The maximum number of iterations for each insertion $t$ is set to 100.

For each configuration, the maximum stash occupancy was logged and the simulation inserted elements until a 95\% memory use was reached. The simulation was repeated 1000 times. Figure~\ref{fig:graph1} shows the average across all the runs of the maximum stash occupancy observed. The value of $P$ that provides the best result is close to 1, but too large a value of $P$ yields a larger stash occupancy. This confirms the discussion in the previous section; in most cases it is beneficial to move elements that create the least number of false positives but a purely greedy strategy can lead to unfortunate behaviors. From these results it appears that a value of $P$ in the range 0.95 to 0.99 provides the best results.  

\begin{figure}[t]
	\centering
	\includegraphics[width=0.48\textwidth]{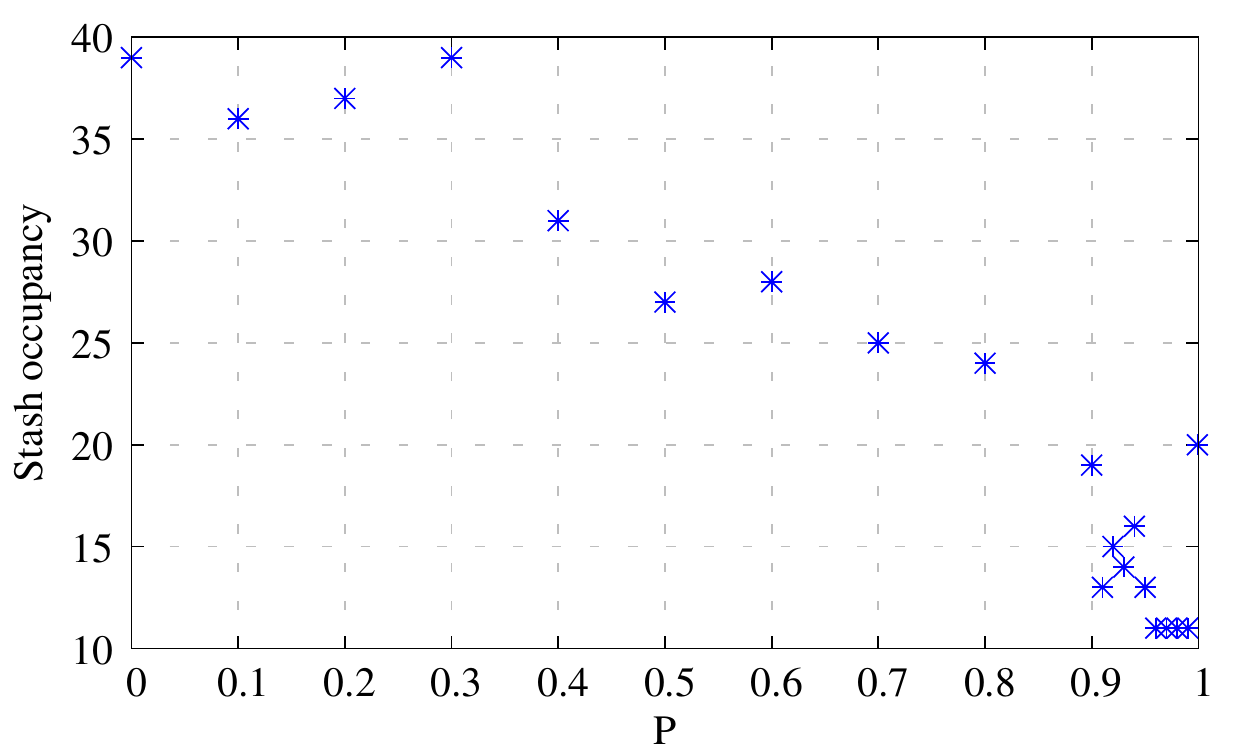} 
\caption{Average of the maximum stash occupancy over 1000 runs for different values of $P$ at 95\% memory occupancy, single-table, $k=4$, bpe=4 and $t=100$.}
\label{fig:graph1}
\end{figure}

In the second experiment, we set $P=0.99$ and we varied $k$ from 1 to 8. The results for the single-table configuration are shown in Figure~\ref{fig:graph2}. In this case, the best values were $k= 3,4$ when the double-table implementation is used and $k=3$ when a single table is used. However, the variation as $k$ increases up to 8 is small. (Using $k=1$ provided poor performance.)  Based on the results of these two smaller experiments, the values $P=0.99$ and $k=3$ for the single-table variant and $k=4$ for the double-table variant are used for the rest of the simulations. 

\begin{figure}[t]
	\centering
	\includegraphics[width=0.48\textwidth]{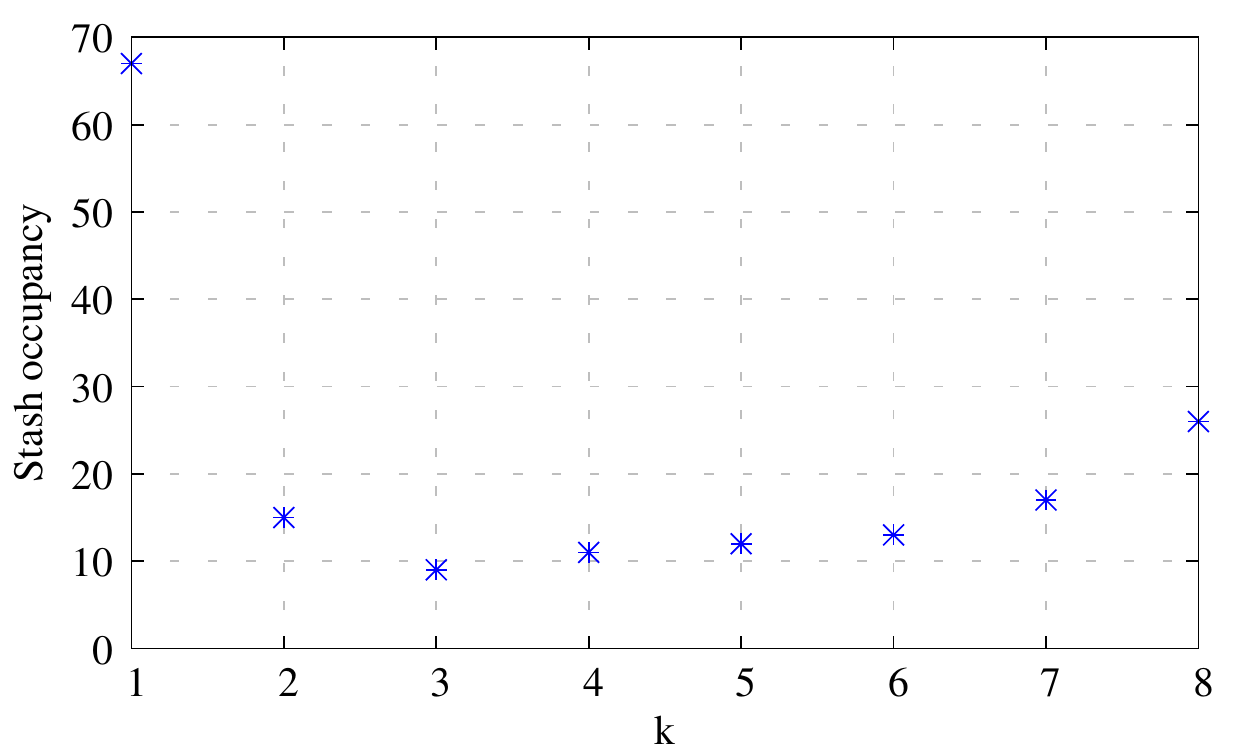} 
\caption{Average of the maximum stash occupancy over 1000 runs for different values of $k$ at 95\% memory occupancy, single-table, $P$=0.99, bpe=4 and $t$=100.}
\label{fig:graph2}
\end{figure}

Given these choices of $P$ and $k$, we aim to show that EMOMA can reliably achieve 95\% occupancy in the cuckoo hash table using four on-chip memory bits per element for the CBBF. We test this for cuckoo hash tables of sizes 32K, 1M, and 8M elements, with both single-table and double-table implementations. In particular, we track the maximum occupancy of the stash during the insertion procedure in which the table is filled up to 95\% of table size. The distribution of the stash occupancies over 1000 runs are shown in Figure ~\ref{fig:graph3}. 


In all cases, the maximum stash size observed is fairly small. The maximum values for the single-table option were 9, 14, and 16 for table sizes 32K, 1M, and 8M respectively. For the double-table option, these maxima were 9, 18, and 33. These results suggest that the single-table option is better, especially for large table sizes.

We also looked at the percentage of elements stored using $h_1(x)$ and $h_2(x)$. In the single-table implementation, the percentages were 59\% and 41\% respectively, while in the double-table implementation, the percentages were 52\% and 48\%. These results show how the use of a single table enables placing more elements using the first hash function, thereby reducing the false positive rate in the CBBF and thus the number of elements locked. This confirms our previous intuition. In fact, the use of a single-table has another subtle benefit: when inserting an element $x$ using $h_2(x)$, of the elements in bucket $h_1(x)$, only those inserted there with $h_1$ can cause a false positive. With two tables, all the elements in the first table in bucket $h_1(x)$ can cause a false positive. Therefore on average the single-table implementation has fewer candidates to create false positives than the double-table implementation for each insertion using $h_2$. These factors tend to make the single-table option better, as will be further seen in our remaining simulation results.  We therefore expect that the single-table variant will be used in practical implementations.


\begin{figure*}[t]
	\centering
	\includegraphics[width=0.3\textwidth]{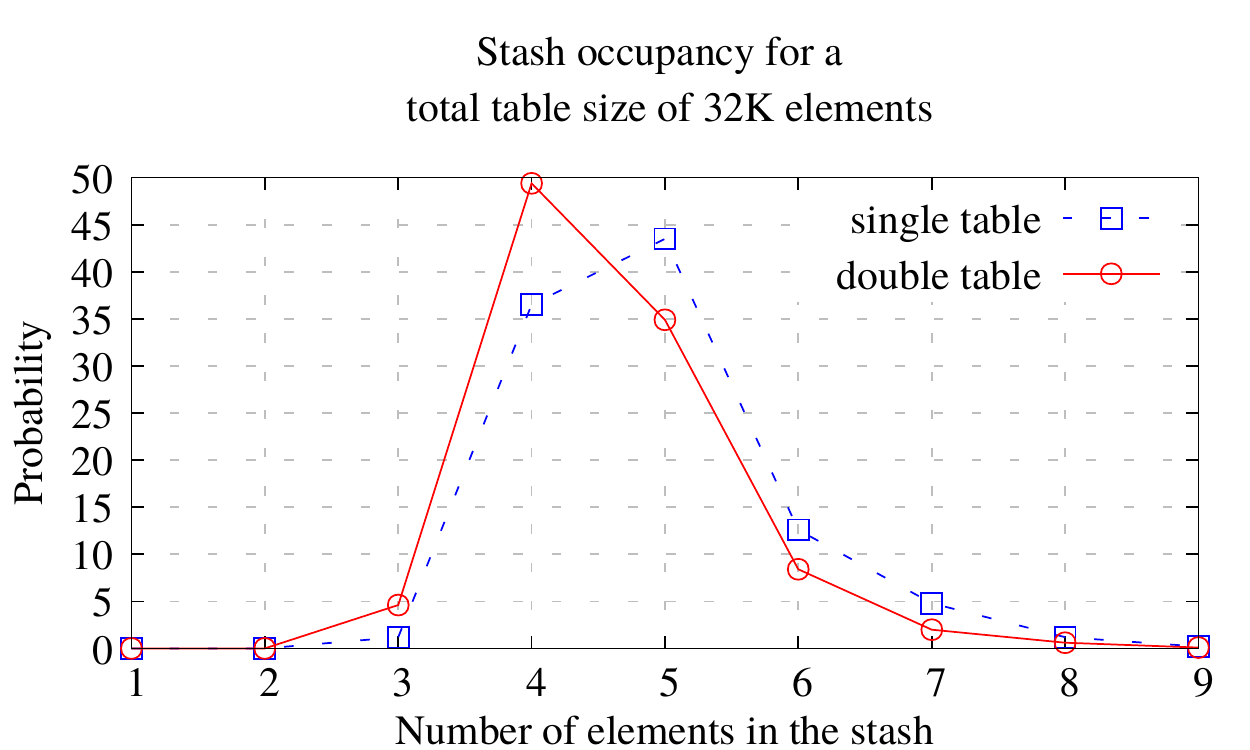} 
    \includegraphics[width=0.3\textwidth]{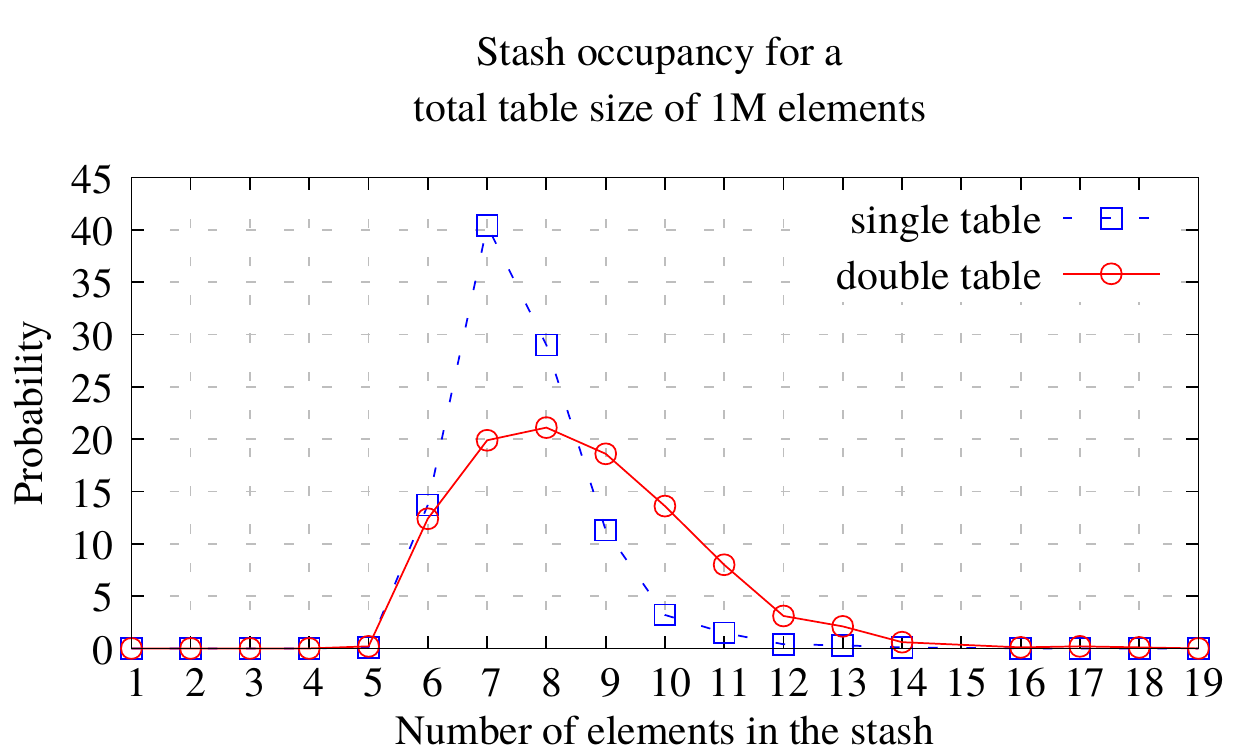} 
    \includegraphics[width=0.3\textwidth]{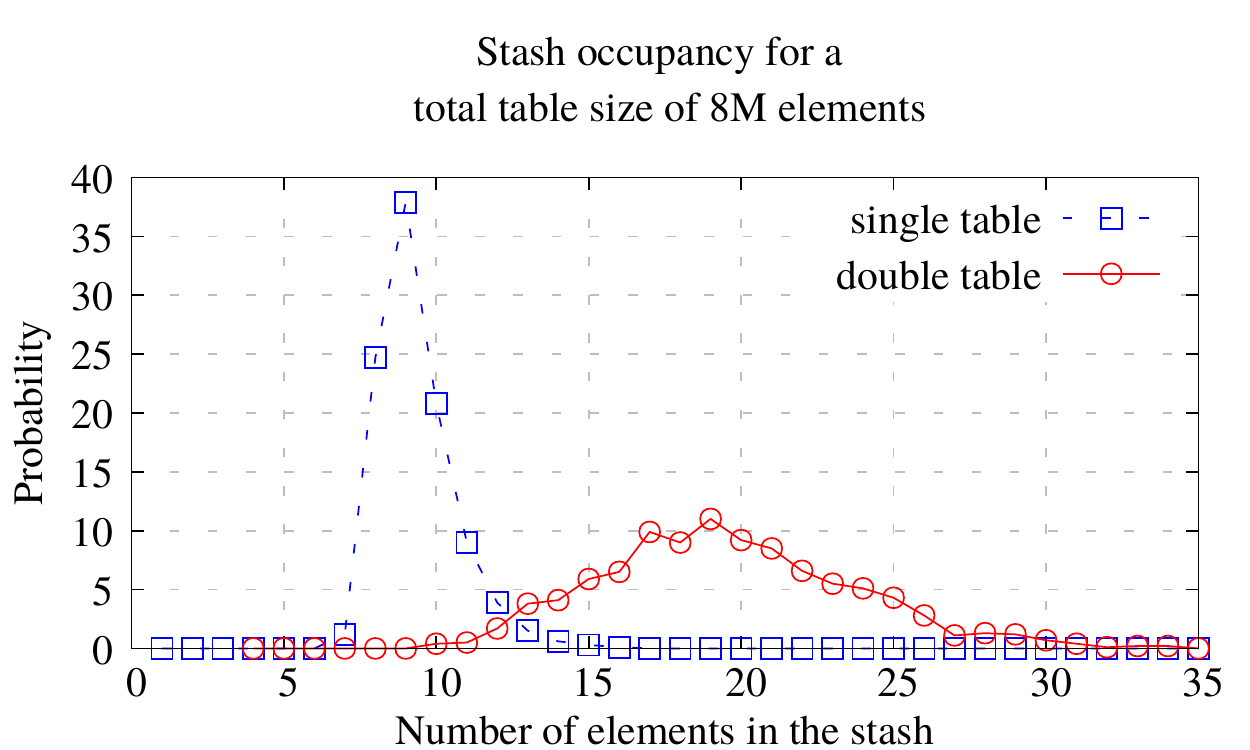} 
\caption{Probability distribution function for the maximum stash occupancy observed during the simulation at 95\% memory occupancy for $t=100$ and a total size of 32K, 1M, and 8M elements. 
}
\label{fig:graph3}
\end{figure*}


%



\subsection{Dynamic behavior at maximum load}

We conducted additional experiments for tables of size 8M to test performance with the insertion and removal of elements. We first load the hash table to 95\% memory occupancy, and then perform 16M replacement operations. The replacement first randomly selects an element in the EMOMA structure and removes it. Then it randomly creates a new entry (not already or previously present in the EMOMA) and inserts it.
This is a standard test for structures that handle insertions and deletions. The experiments were repeated 10 times, for both the single-table and double-table implementations. These experiments allow us to investigate the stability of the size of the stash in dynamic settings, near the maximum load.  Ideally, the stash size would remain almost constant in such dynamic settings. In Figure \ref{fig:graph4} we report the maximum stash occupancy observed. 
Each data point gives the maximum stash occupancy observed over the 10 trials over the last 1M trials;  that is, when the $x$-axis is 6, the data point is the maximum stash occupancy over replacements 5M to 6M over the 10 trials.

\begin{figure}[ht]
	\centering
	\includegraphics[width=0.4\textwidth]{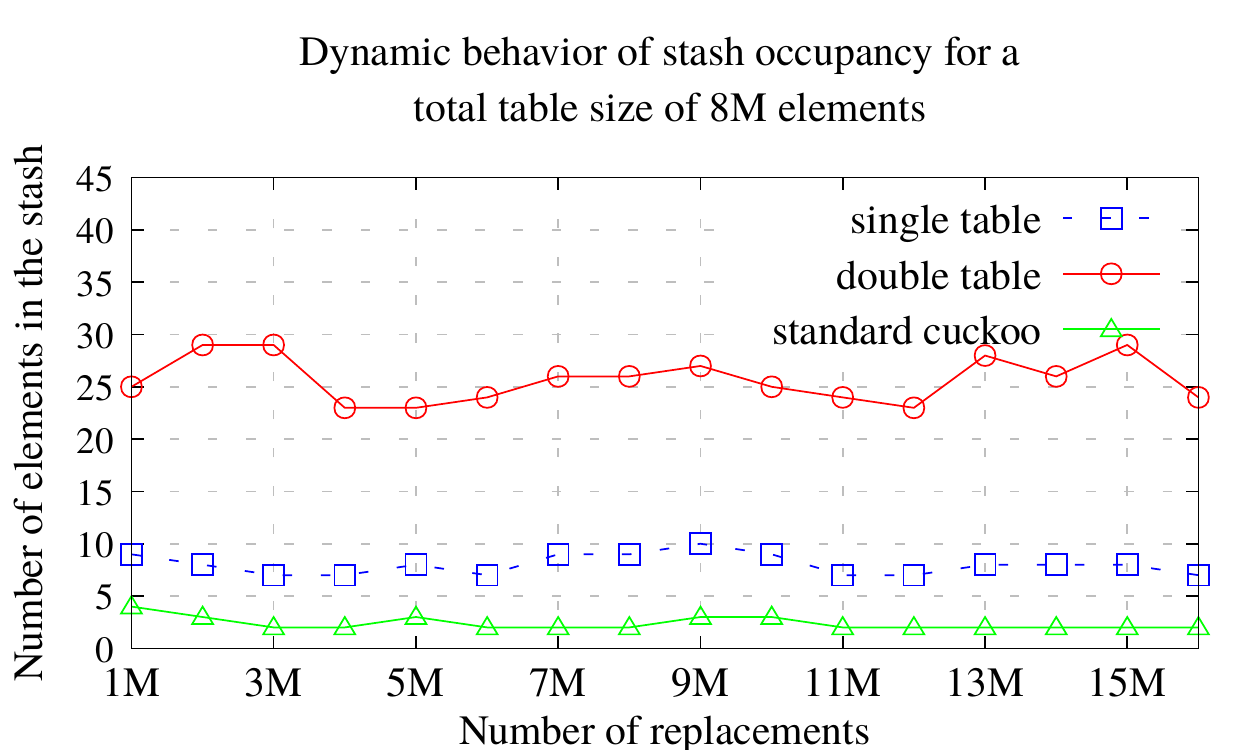}    
\caption{Maximum stash occupancy observed during insertion/removal for the standard cuckoo table, the single-table EMOMA and the double-table EMOMA implementations of total size of 8M elements with $t=100$.}

\label{fig:graph4}
\end{figure}

The experiments show that both implementations reliably maintain a stable stash size under repeated insertions and removals. The maximum stash occupancy observed over the 10 trials for the standard cuckoo table is in the range 1-4, for the single table EMOMA is always in the range 7-10, and for the double-table EMOMA setting it is in the range 23-29. This again shows that the single-table implementation provides better performance than the double-table, with a limited penalty in terms of stash size with respect to the standard cuckoo table.

\subsection{Insertion time}

The average number of iterations per insertion, which we also refer to as the average insertion time, can determine the frequency with which the EMOMA structure can be updated in practice, as the memory bandwidth needed to perform insertions is not available for query operations. The average insertion time depends both on $t$, the maximum number of iterations allowed for a single insertion, and on the load of the EMOMA structure. Larger $t$ allows for smaller stash sizes, as fewer elements are placed in the stash because they have run out of time when being inserted, but the corresponding drawback is an increase in the average insertion time. 
%

In Figure \ref{fig:graph5} we report the average number of iterations per insertion at different loads for $t=10,50,100,$ and 500 in tables of size 8M. 
The table is filled to the target load, and then 1M fresh elements are inserted by the same insertion/removal process described previously.  We measure the average number of iterations per insertion for the freshly inserted elements. The plots report the average insertion time for the single-table and double-table EMOMA configurations and for a standard cuckoo table.  


\begin{figure*}[t]
	\centering
	\includegraphics[width=0.32\textwidth]{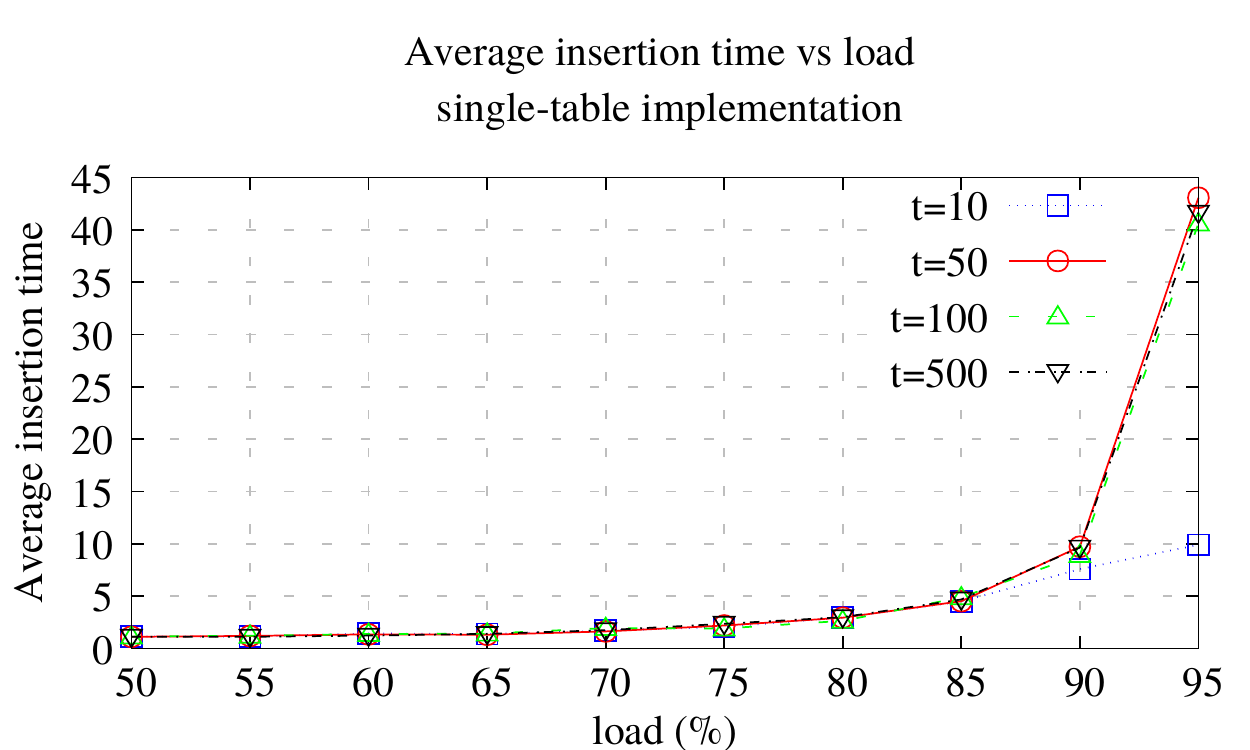}    
    \includegraphics[width=0.32\textwidth]{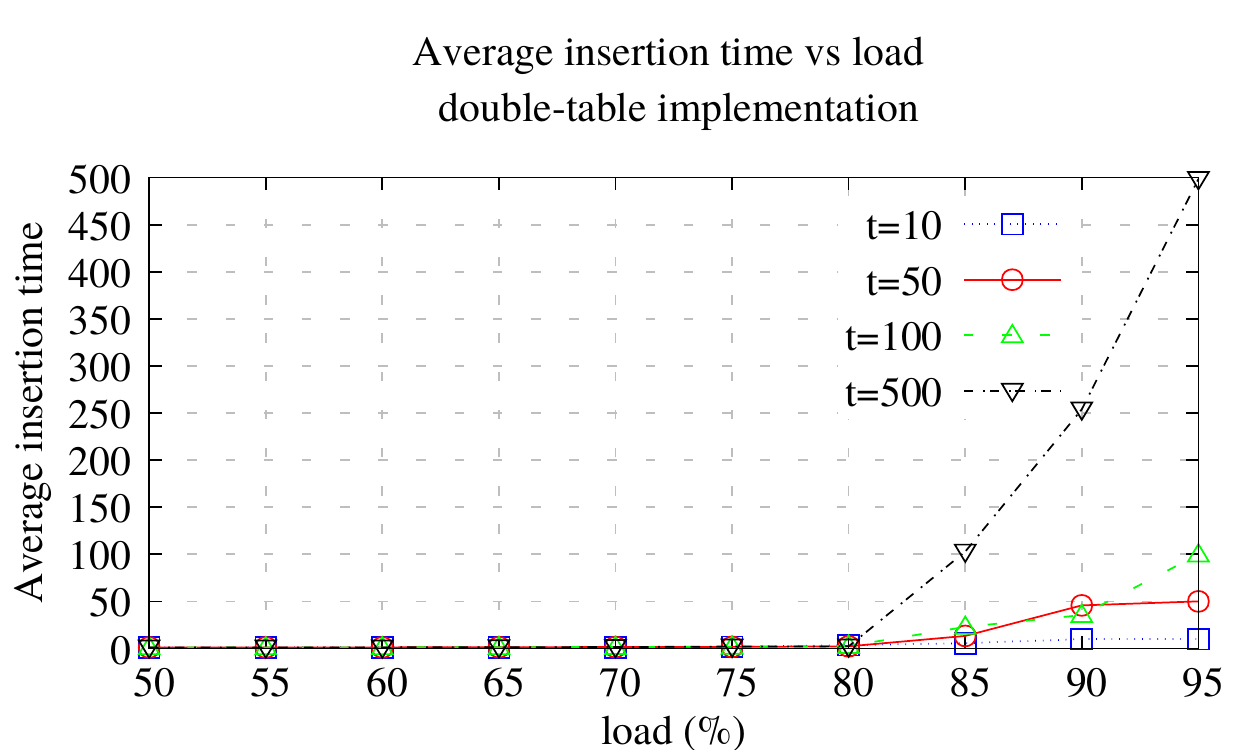}    
    \includegraphics[width=0.32\textwidth]{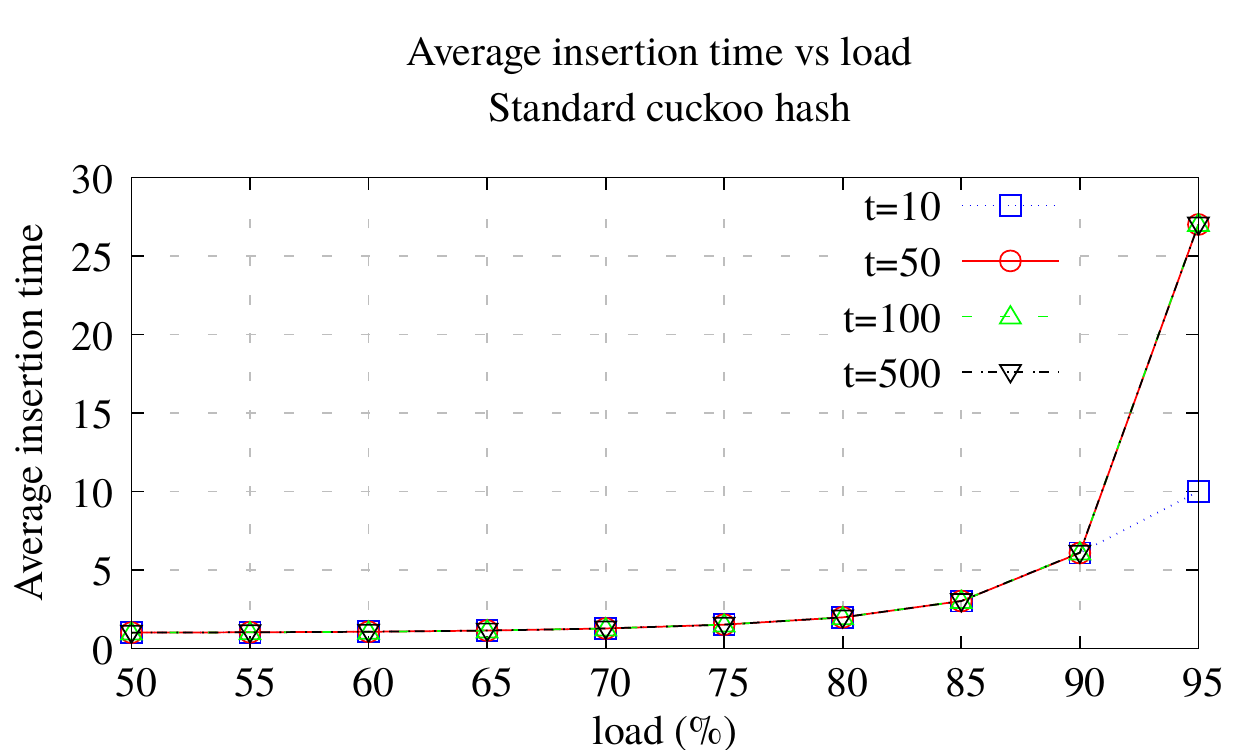}    
\caption{Average insertion time with respect to number of inserted elements (load) with different $t$ values. 
}
\label{fig:graph5}
\end{figure*}

As expected, the average insertion time increases substantially when the load increases to a point where the table is almost full. However, the behavior of the single-table and double-table configurations is significantly different (note the difference in the scale of the $y$-axes). For the single-table at maximum load (95\%) the average insertion time is almost equal to the maximum number of allowed iterations when $t=10$. This corresponds to a condition in which EMOMA is unable to complete insertions of new elements in $t$ steps, so elements remain in the stash, provoking an uncontrolled growth in the stash. With greater values of $t$, the system is able to insert the elements into the table in fewer than $t$ steps on average, with the average number of iterations per new element converging to around 44. In other words, in our tests when $t$ is at least 50, there will be some intervals where the stash empties, so the algorithm stops before reaching the maximum number of allowed iterations.  The single-table configuration can therefore work reliably when $t$ is set to values of at least 50. It is interesting to note that the results obtained for the single-table EMOMA configuration are qualitatively similar to those obtained for a standard cuckoo hash. In fact, for a standard cuckoo hash table the stash grows uncontrollably when $t=10$, but is stable when $t$ is at least 50. The average number of iterations per new element that is around 27 for the standard cuckoo hash table, so again we see the EMOMA implementation suffers a small penalty for the gain of knowing which of the two buckets an element lies in. Finally, it is interesting to note that the average number of iterations per new element also gives us an idea of the ratio of searches vs. insertions for which EMOMA is practical. For example, if the ratio is 1000 searches per insertion, then EMOMA requires only 4.4\% of the memory bandwidth for insertions.

For the double-table configuration, we instead see that the average insertion time remains almost equal to the  maximum number of allowed iterations. This means that the stash almost never empties, with some elements in the stash that the structure is either unable to place in the main table, or that stay in the stash for a large number of iterations. To avoid wasting memory accesses trying to place those elements, we could mark those elements and avoid attempts at moving them into the main table until a suitable number of replacements has been done. However, because we assume that the single-table implementation will be preferred due to its better performance, we do not explore this possibility further. 



\begin{figure}[ht]
	\centering
	\includegraphics[width=0.48\textwidth]{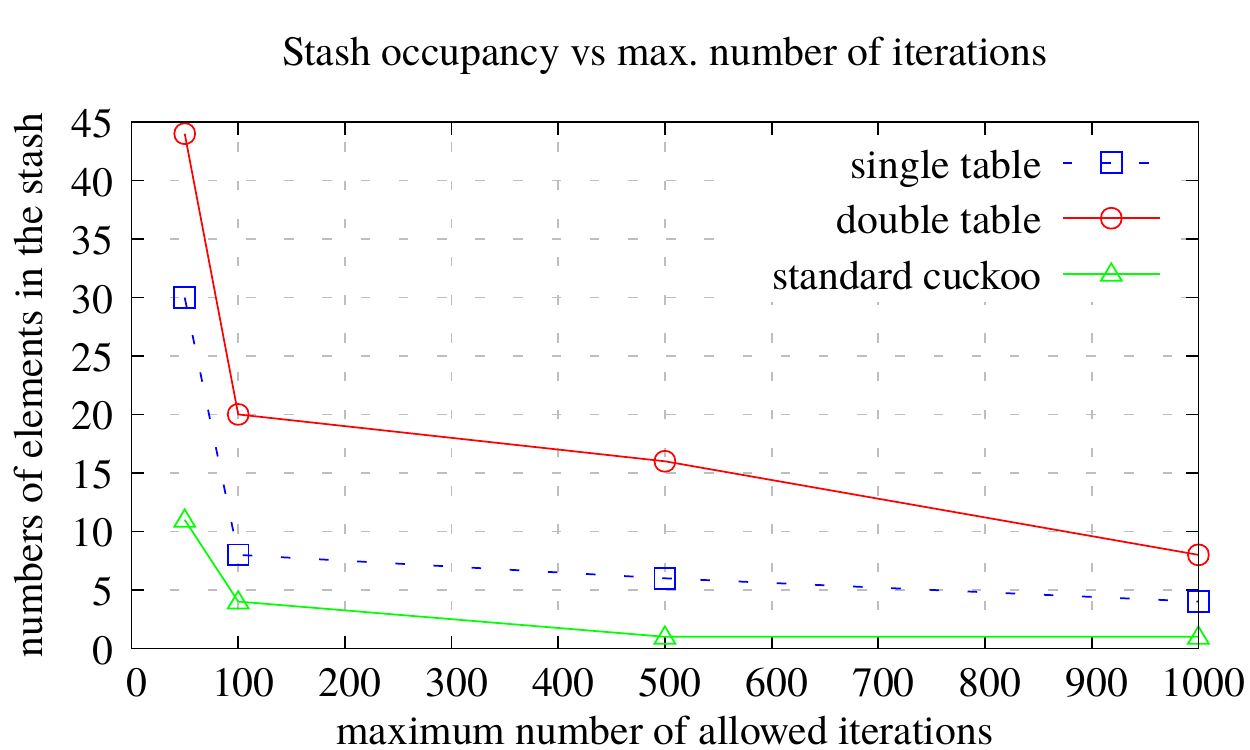}    
\caption{Maximum observed stash occupancy with respect to maximum number of allowed iterations $t$.}
\label{fig:graph6}
\end{figure}

To better understand the relationship between the maximum number of allowed iterations and the stash behavior, in Figure \ref{fig:graph6} we report the maximum stash occupancy observed over 100 trials at maximum load, for $t=50,100,500,$ and 1000, and for a table size of 8M elements. The graph reports the average insertion time for the single-table and double-table EMOMA configurations and for a standard cuckoo table.  
As expected, higher values of $t$ allow a smaller stash. The graph also shows that, with the same value of $t$, the single-table configuration requires fewer elements in the stash than the double-table configuration. 
The comparison with the standard cuckoo table shows that the standard cuckoo table does not actually need a stash if the number of allowed iterations is sufficiently large (the maximum value of 1 is due to the pending item that is moved during the insertion process), while the stash remains necessary for the EMOMA structures.  This is consistent with known results about cuckoo hashing \cite{21}. 


Summarizing, these experiments show that the single-table configuration provides better performance, but both configurations can work reliably even at the maximum target load of 95\%.


\subsection{Stash occupancy vs. table size}

The previous results suggest that a fairly small stash size is sufficient to enable a reliable operation of EMOMA when the single-table configuration is used. It is important to quantify how the maximum stash occupancy changes with respect to the table size in order to provision the stash to avoid overflow. We performed simulations to estimate the behavior of the failure probability with respect the table size and tried to extract some empirical rules. Obtaining more precise, provable numerical bounds remains an interesting theoretical open question.  Since we have already shown that the stash occupancy of the single-table configuration is significantly lower than that of the double-table configuration, we restricted the analysis only to the single-table case.

We performed 10,000 experiments where we fill the EMOMA table up to 95\% load and logged the maximum number of elements stored in the stash during the insertion phase. The simulation has been performed for table sizes 32K, 64K, 128K, 256K, 512K, 1M, 2M, 4M, and 8M. 

\begin{figure}[t]
	\centering
	\includegraphics[width=0.48\textwidth]{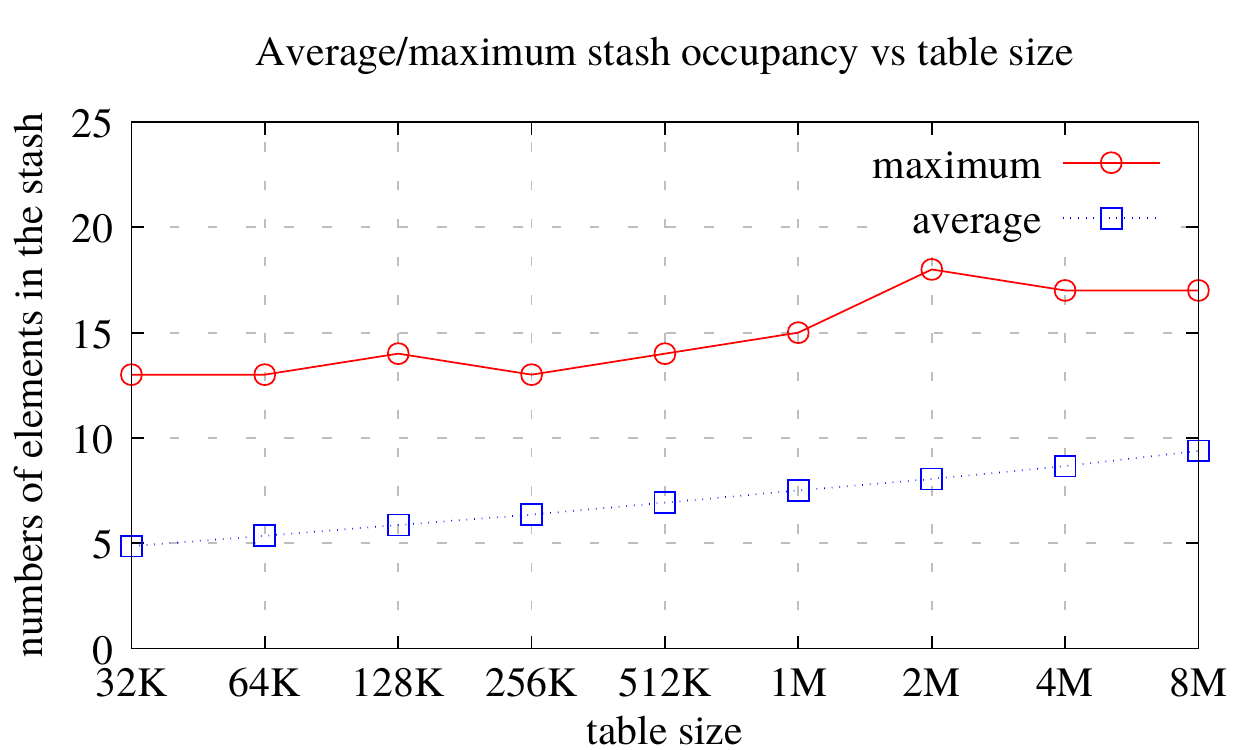}    
\caption{Average and maximum over 10,000 trials of the maximum number of elements in the stash with respect to table size for the single-table configuration}
\label{fig:graph7}
\end{figure}

Fig.~\ref{fig:graph7} presents the average maximum number of elements in the stash with respect to table size at the end of the insertion phase and the overall maximum stash occupancy observed over the 10,000 trials. As a rule of thumb, we can estimate that the average number of elements in the stash increases by 0.5 when the table size doubles. A similar trend occurs also for the  maximum stash occupancy observed over the 10,000 trials although in this case the variability is larger than for the average.

\begin{figure*}[t]
	\centering
	\includegraphics[width=0.48\textwidth]{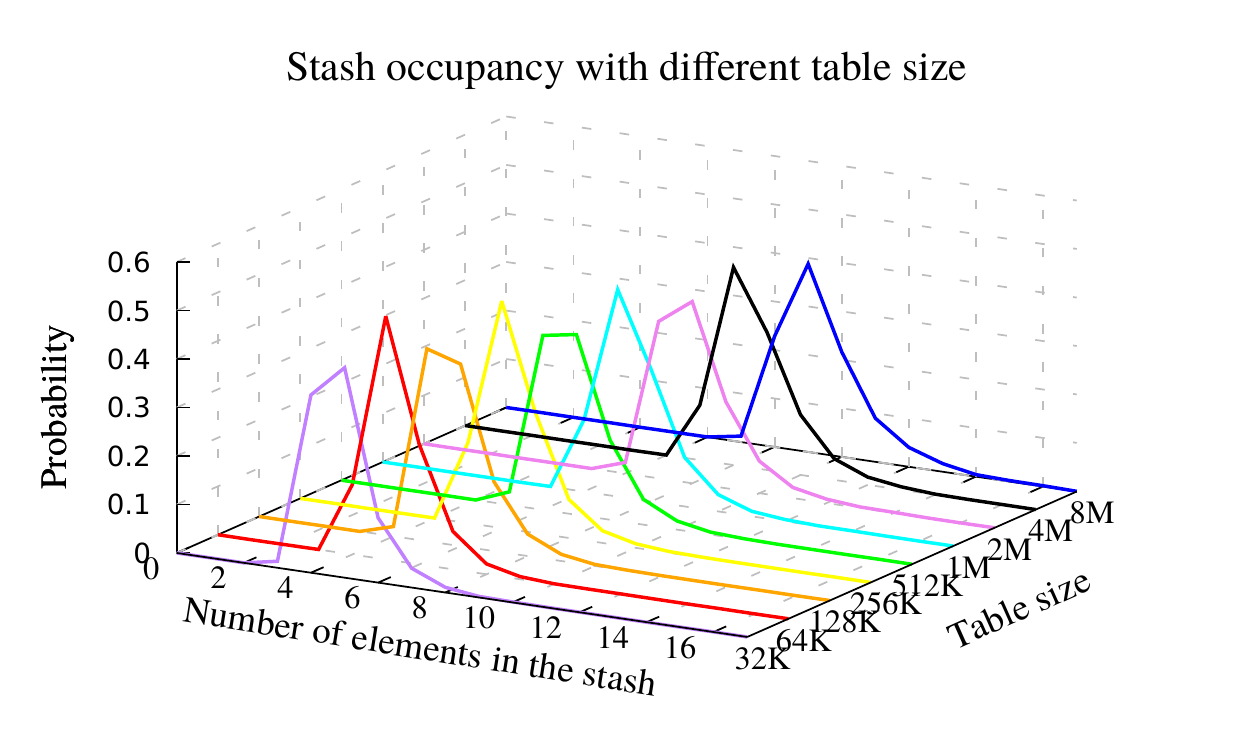}    
  	\includegraphics[width=0.48\textwidth]{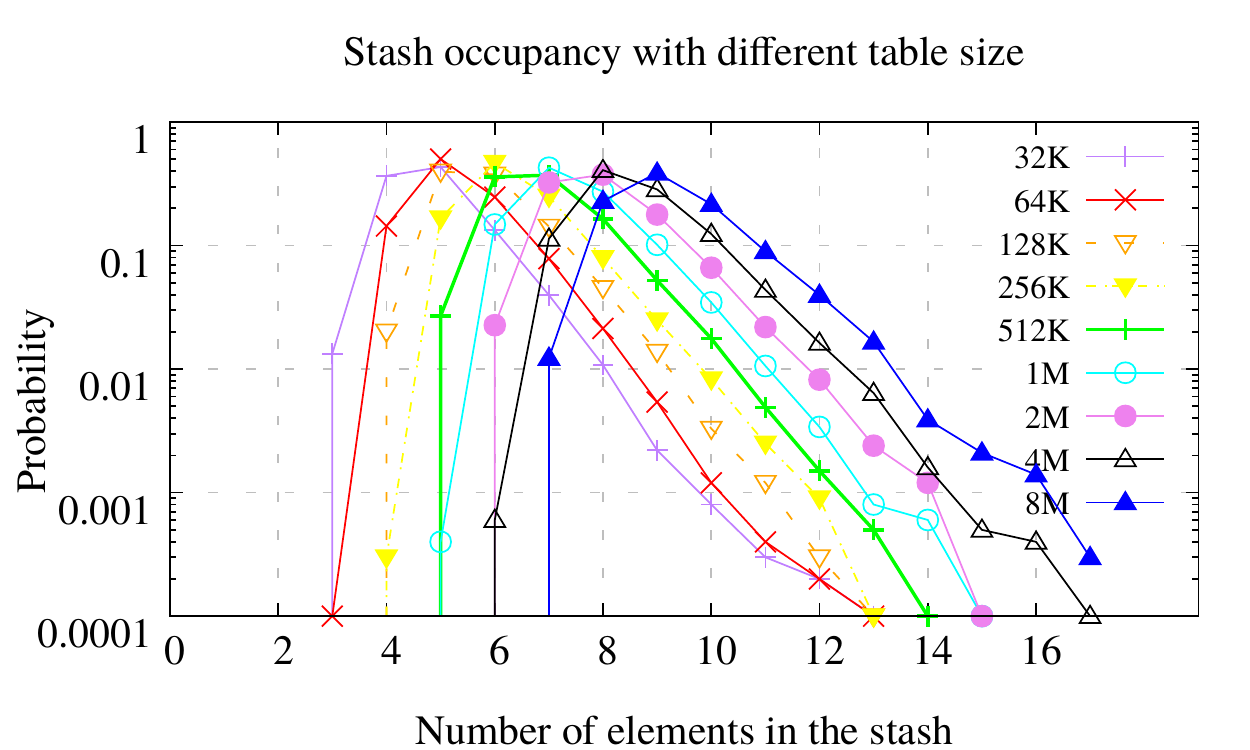}    
\caption{Probability distribution function for the maximum stash occupancy observed during the simulation at 95\% memory occupancy for different table size}
\label{fig:graph8}
\end{figure*}

Fig. \ref{fig:graph8} shows in linear and logarithmic scale the probability distribution function for the maximum stash occupancy for different table sizes over the 10,000 trials. As can be seen, after reaching a maximum value, the probability distribution function decreases exponentially with a slope that is slightly dependent on the table size.  A conservative estimate based on the empirical results is that beyond the average value for the maximum stash size, the probability of reaching a certain stash size falls by a factor of 10 as the stash size increases by 3 elements.

As an example of how to use this rule of thumb, we see that the empirically observed probability of having 17 or more elements in the stash for a table of size 8M at 95\% load is less than $10^{-3}$.  If a stash of size 16 fails with probability at most $10^{-3}$, by our rule of thumb we estimate a stash of size 31 would fail with probability at most $10^{-8}$, and a stash of size 64 would fail with probability at most $10^{-19}$.  While these are just estimates, they suggest that a stash that holds 64 elements will be sufficient for most practical scenarios.  

\section{Comparison with Alternative Approaches}

Most of the existing hash based techniques to implement exact match
have a worst case of more than one external memory access to complete
a lookup.  Such a worst case would hold for example for a hash table with separate
chaining or a standard cuckoo hash table.

The number of external memory accesses can be reduced by using an
on-chip approximate membership data structure that selects the
external positions that need to be accessed. In many cases
this does not result in a worst case of one memory access per lookup
due to false positives.  For example if a Bloom filter is used to
determine if a given position needs to be checked, a false positive
will cause an access to that position, even if the element is stored in another
position.  Other approaches to this problem have been proposed, namely the Fast Hash Table
(FHT) \cite{16} and the Bloomier filter \cite{28,29}.

In the Fast Hash Table with extended Bloom filter \cite{16}, $k$ hash
functions are used to map elements to $k$ possible positions in an
external memory. The same hash functions are used in an on-chip
counting Bloom filter.  Elements are then stored in the position (out
of the $k$) that has the smallest count value in the counting Bloom
filter.  If there are more than one position with the same count
value, the position with the smallest index is selected.  Then on a
search, the counting Bloom filter is checked and only that position is
accessed.  In most cases this method requires a single external memory
access, even under the assumption that a bucket holds only one element.  
(We assume an external memory access corresponds to a bucket of four
elements in our work above.)
However, when two (or more) elements are stored on the same
position (because it has the minimum count value for both), more than
one access may be required.  

The probability of
this occurring can be reduced by artificially increasing the counter
in those cases so that elements are forced to map to other positions.
In \cite{16}, the counting Bloom filter was dimensioned to have a size
$m$ that is 12.8 times the number of elements $n$ to be stored.  As
three bits were used for the counters this means that approximately 38
bits of on-chip memory are needed per element.  This is almost an
order of magnitude more than the on-chip memory required for EMOMA.
This difference arises because the counters have to be stored on-chip
and the load $n/m$ of the counting Bloom filter has to be well below
one for the scheme to work.  While this memory could be reduced for larger
bucket sizes, the on-chip memory use is still significantly larger
than ours in natural configurations.  Similarly, the off-chip memory is
significantly larger;  most buckets in the FHT schemes are necessarily empty.  
Finally, insertions and deletions are significantly more complex.  
Overall, the FHT approach takes more space and, being more complex, is much less amenable to a hardware implementation.  

Another alternative would be to use the approach we use in this paper,
but use a Bloomier filter \cite{28,29} in place of a counting
block Bloom filter to determine the position in external memory that
needs to be accessed.  A Bloomier filter is a data structure designed
to provide values for elements in a set; it can be seen as an
extension of a Bloom filter that provides not just membership
information, but a return value.  In particular, the output for a Bloomier
filter could be from $\{0,1\}$, denoting which hash function to use for
an element.  If a query is made for an element
not in the set, an arbitrary value can be returned; this feature of a
Bloomier filter is similar to the false positive of a Bloom filter.  
Moreover, a mutable Bloomier filter can be modified, so if an
element's position in the cuckoo table changes (that is the hash function used
for that element changes), the Bloomier filter can be updated in constant average time and logarithmic
(in $n$) time with high probability.  As a Bloomier filter provides
the exact response for elements in the set, only one external memory
access is needed; for elements not present in the set, at most one
memory access is also required, and the element will not be found.  Advantages of the Bloomier filter 
are that it allows the full flexibility of the choices in the cuckoo
hash table, so slightly higher hash table loads can be achieved.  It
can potentially also use less on-chip memory per element (at the risk of
increasing the probability needed for reconstruction, discussed below).   

However, the Bloomier filter comes with significant drawbacks.  First,
a significant amount ($\Omega(n \log n)$ under known constructions) of
additional off-chip memory would be required to allow a Bloomier
filter to be mutable.  Bloomier filters have non-trivial failure
probabilities; even offline, their failure probability is constant
when using space linear in $n$.  Hence, particularly under insertion and deletion
of elements, there is a small but non-trivial chance the Bloomier filter
will have to be reconstructed with new hash functions.  Such
reconstructions pose a problem for network devices that require high
availability.  Finally, the construction and update procedures of
Bloomier filters are more complex and difficult to implement in
hardware than our construction.  In particular, they require solving
sets of linear equations (that can be solved by back substitution) 
to determine what values to store so that the proper value is returned
on an element query, compared to the more simple operations of our 
proposed counting block Bloom filter.


Because of these significant issues, we have not implemented
head-to-head comparisons between EMOMA and these alternatives.  While
all of these solutions represent potentially useful data structures for
some problem settings,
for solutions requiring hardware-amenable designs using a single
off-chip memory access, EMOMA appears significantly better than these
alternatives.

\section{Hardware feasibility}

We have evaluated the feasibility of a hardware implementation of EMOMA using the NetFPGA SUME board \cite{9} as the target platform. The SUME NetFPGA is a well-known solution for rapid prototyping of 10Gb/s and
40Gb/s applications. It is based upon a  Xilinx Virtex-7 690T FPGA device and has four 10Gb/s Ethernet interfaces, three 36-bit QDRII+ SRAM memory devices running at 500MHz, and a DRAM memory composed of two 64-bit DDR3 memory modules running at 933MHz. We leverage the reference design available for the SUME NetFPGA to implement our scheme. In particular, the reference design contains a MicroBlaze (the Xilinx 32-bit soft-core RISC microprocessor) that is used to control the blocks implemented in the FPGA the using the AXI-Lite\cite{15} bus. The microprocessor can be used to perform the insertion procedures of the EMOMA scheme, writing the necessary values in the CBBF, in the stash, and in the external memories. 

Our goal here is to determine the hardware resources that would be used by an EMOMA scheme.  We select a key of 64 bits and an associated value of 64 bits. Therefore, each bucket of four cells has 512 bits. A bucket can be read in one memory access as a DRAM burst access provides precisely 512 bits. The main table has 524288 (512K) buckets of 512 bits requiring in total 256Mb of memory.  The stash is realized implementing on the FPGA a 64x64 bits Content Addressable Memory with the write port connected to the AXI bus and the read port used to perform the query operations. For the CBBF we used $k=4$ hash functions and a memory size of 524288 (512K) words of 16 bits. The memory of the CBBF uses two ports: the write port is connected to the AXI bus and the read port is used for the query operations. The results are reported in Table \ref{t:synth}. The table reports for each hardware block the number of LUTs (Look-Up Tables), the number of Flip-Flops, and the number of BRAMs (Block RAMs) used. We also show in parenthesis the percentage of resources used with respect to those available in the FPGA hosted in the NetFPGA board. For completeness, we report also the overhead of the MicroBlaze, even if it is not related only to the EMOMA scheme, as it is needed for almost any application built on top of the NetFPGA. It can be observed that EMOMA needs only a small fraction of the FPGA resources. As expected, the most demanding block is the memory for the CBBF, which in this case requires 256  (17\%) of the 1470 available Block RAMs.

\begin{table}[t]
\centering
\caption{Hardware cost of EMOMA components}
\begin{tabular}{|c|c|c|c|}
  \hline
EMOMA component &  \#LUTs  & Flip-Flops & \#BRAM  \\
\hline
Stash & 3337 (1.12\%) & 102 ($<$  0.01\%) & 1 ($<$ 0.01\%) \\
 \hline
CBBF &  61 ($<$ 0.01\%) & 1  ($<$ 0.01\%) & 256 (17.41\%) \\  
  \hline
MicroBlaze &  882 (0.27\%) & 771  (0.09\%) & 32 (2.18\%) \\  
  \hline
\end{tabular}
\label{t:synth}
\end{table}

Finally, the insertion procedure has been compiled for the MicroBlaze architecture and the code footprint is around 30KB of code. This is a fairly small amount of memory, since the instruction memory size of the MicroBlaze can be configured to be larger than 256KB. As a summary, this initial evaluation shows that EMOMA can be implemented on an FPGA based system with limited cost.


\section{Conclusions and Future Work}

We have presented Exact Match in One Memory Access (EMOMA), a scheme
that implements exact match with only one access to external memory,
targeted towards hardware implementations of high availability network
processing devices such as switches or routers.  EMOMA uses a counting
block Bloom filter to select the position that needs to be accessed in
an external memory cuckoo hash table to find an element.  By sharing
one hash function between the cuckoo hash table and the counting block
Bloom filter, we enable fast identification of the elements that can
create false positives, allowing those elements to be moved in the
hash table to avoid the false positives. This requires a 
few additional memory accesses for some insertion operations and a slightly 
more complex insertion procedure. Our evaluation shows that
EMOMA can achieve around 95\% utilization of the external memory
when using only slightly more than 4 bits of on-chip memory for each
element stored in the table.  This compares quite favorably with previous
schemes such as Fast Hash Table \cite{16}, and is also simpler for 
implementation.

A theoretical analysis of EMOMA remains open, and might provide
additional insights on optimization of EMOMA. Another idea to explore
would be to generalize EMOMA so that instead of the same hash function
being used for the counting block Bloom filter and the first position
in the cuckoo hash table, only the higher order bits of that function
were used for the CBBF.  This would mean several buckets in the cuckoo
hash table would map to the same block in the CBBF, providing
additional trade-offs.  In particular, this would lead to EMOMA
configurations using fewer bits of on-chip memory per element, but
would lead to more complex insertion operations. 

\section{Acknowledgments}
Salvatore Pontarelli is partially supported by the European Commission in the frame of the BEBA project \\
http://www.beba-project.eu/.
Pedro Reviriego would like to acknowledge the support of the excellence network Elastic Networks TEC2015-71932-REDT funded by the Spanish Ministry of Economy and Competitivity.  Michael Mitzenmacher was supported in part by NSF grants CNS-1228598, CCF-1320231, CCF-1535795, and CCF-1563710.  

\bibliographystyle{IEEEtran}

\end{document}